\newif\ifmaximal
	\newcommand{\zerodisplayskips}{%
	  \setlength{\abovedisplayskip}{0pt}%
	  \setlength{\belowdisplayskip}{0pt}%
	  \setlength{\abovedisplayshortskip}{0pt}%
	  \setlength{\belowdisplayshortskip}{0pt}}
	\appto{\normalsize}{\zerodisplayskips}
	\appto{\small}{\zerodisplayskips}
	\appto{\footnotesize}{\zerodisplayskips}
\newcommand{\KM}[1]{{\color{blue} \textbf{KM:} #1}}
\DeclareMathOperator{\Paths}{\mathsf{PATH}}
\newcommand{\Sadegh}{\textcolor{magenta}}
\newtheorem{problem}{Problem}
\newtheorem{assumption}{Assumption}
\newcommand{\Ker}{T_{\mathfrak s}}
\newtheorem{remark}{Remark}
\newcommand{\Den}{\mathcal{D}}
\newcommand{\Sys}{\mathfrak S}
\newcommand{\cell}[1]{\llbracket #1\rrbracket}
\newcommand{\Xs}{\mathcal S}
\newcommand{\xs}{x}
\newcommand{\Us}{\mathcal U}
\newcommand{\reach}{\ensuremath{\Phi}}
\newcommand{\Abs}{\mathcal{A}}
\newcommand{\Xh}{\widehat{\Xs}}
\newcommand{\xh}{\widehat{\xs}}
\newcommand{\Fo}{\overline{F}}
\newcommand{\Fu}{\underline{F}}
\newcommand{\Bh}{\underline{B}}
\newcommand{\Bho}{\overline{B}}
\newcommand{\cpre}{\mathit{Cpre}}
\newcommand{\pre}{\mathit{Pre}}
\newcommand{\apre}{\mathit{Apre}}
\newcommand{\upre}{\mathit{Upre}}
\newcommand{\Cont}{\mathcal{C}}
\newcommand{\Conth}{\widehat{\Cont}}
\newcommand{\W}{\mathsf{WinDom}}
\newcommand{\set}[1]{\lbrace #1 \rbrace}
\newcommand{\dom}{\mathsf{dom}\;}
\newcommand{\wu}{\underline{W}}
\newcommand{\wo}{\overline{W}}
\newcommand{\lo}{\overline{L}}
\newcommand{\REFthm}[1]{\text{Thm.~\ref{#1}}}
\newcommand{\REFdef}[1]{Def.~\ref{#1}}
\newcommand{\REFalg}[1]{Alg.~\ref{#1}}
\newcommand{\REFrem}[1]{Rem.~\ref{#1}}
\newcommand{\REFsec}[1]{Sec.~\ref{#1}}
\newcommand{\REFprop}[1]{Prop.~\ref{#1}}
\newcommand{\REFfig}[1]{Fig.~\ref{#1}}
\newcommand{\REFass}[1]{Assump.~\ref{#1}}
\newcommand{\REFtab}[1]{Table~\ref{#1}}
\newcommand{\REFex}[1]{Ex.~\ref{#1}}
\newcommand{\REFprob}[1]{Prob.~\ref{#1}}
\begin{document}

\title{Symbolic Controller Synthesis for B\"{u}chi Specifications\\ on Stochastic Systems}

\author{Rupak Majumdar} 
\orcid{https://orcid.org/0000-0003-2136-0542}
\affiliation{MPI-SWS, Germany.}
 \email{rupak@mpi-sws.org} 
 
 \author{Kaushik Mallik} 
 \orcid{https://orcid.org/0000-0001-9864-7475} 
 \affiliation{MPI-SWS, Germany.} 
 \email{kmallik@mpi-sws.org} 
 
 \author{Sadegh Soudjani} 
 \orcid{https://orcid.org/0000-0003-1922-6678} 
 \affiliation{Newcastle University, UK.} 
 \email{sadegh.soudjani@ncl.ac.uk}


\begin{abstract}
We consider the policy synthesis problem for
continuous-state controlled Markov processes evolving in discrete time,
when the specification is given as a B\"uchi condition (visit a set of states infinitely
often).
We decompose computation of the maximal probability of satisfying the B\"uchi condition into two steps. The first step is to compute the maximal {\em qualitative winning set}, from where the B\"uchi condition can be
enforced with probability one. The second step is to find the maximal probability of reaching the already computed qualitative winning set.
In contrast with finite-state models, we show that such a computation only gives a lower bound on the maximal probability where the gap can be non-zero.

In this paper we focus on approximating the qualitative winning set, while pointing out that the existing approaches for unbounded
reachability computation can solve the second step.
We provide an abstraction-based technique to approximate the qualitative winning set by simultaneously using
 an over- and under-approximation
of the probabilistic transition relation.
Since we are interested in qualitative properties, the abstraction is non-probabilistic; instead, the probabilistic
transitions are assumed to be under the control of a (fair) adversary.
Thus, we reduce the original policy synthesis problem to a B\"uchi game under a fairness assumption and characterize
upper and lower bounds on winning sets as nested fixed point expressions in the $\mu$-calculus.
This characterization immediately provides a symbolic algorithm scheme.
Further, a winning strategy computed on the abstract game can be refined to a policy on the controlled Markov process.

We describe a concrete abstraction procedure and demonstrate our algorithm on two case studies.
We show that our techniques are able to provide tight approximations to the qualitative winning set for
the Van der Pol oscillator and a 3-d Dubins' vehicle.
\end{abstract}

\maketitle


\section{Introduction}

Decision making under stochastic uncertainty has many applications in science, engineering, and economics.
Typically, one models a system with uncertainty as a \emph{controlled Markov process} evolving in time. 
Such a process consists in a (possibly uncountable) set of states and actions.
At a given state, an agent picks an action and the state and the action together
determine the distribution over the next states.
The choice of the control action depends on the history of states seen so far and may be
randomized; the decision rule that assigns to each history a distribution of control actions
is called a \emph{policy}. 
Given a temporal specification over trajectories, the goal of the agent is to find an \emph{optimal} policy:
one that maximizes the probability that the resulting trajectory of the system satisfies the specification.
The control problem asks, given a controlled Markov process and a temporal specification (given, e.g.,
in linear temporal logic), to design an optimal policy.
In the finite-state setting, the control problem can be solved algorithmically 
based on graph traversal and linear programming 
\cite{BdA95,Luca98,BK08}.
A lot of recent research has focused on extending algorithmic policy synthesis techniques to \emph{continuous-state} systems.
The goal for continuous-state systems is to provide approximations to the probability of satisfaction, while
providing formal guarantees on convergence of the error.

While synthesis for reachability and safety properties have been studied in this setting both for infinite horizon \cite{TMKA17,HS_TAC19} and for finite horizon \cite{SSoudjani,LAB15,Kariotoglou17,LO17,Vinod18,LAVAE19,Jagtap2019}, 
there are few techniques for synthesis against \emph{B\"uchi} specifications, which requires the trajectory to visit
a given set of states infinitely often.
In this paper, we consider the problem of synthesizing controllers for controlled Markov processes 
for properties specified as B\"uchi conditions. 

The key aspect of the solution in the finite-state case is to separate a synthesis 
problem into a \emph{qualitative} part (find the set of states
from which the agent has a policy to satisfy the property almost surely) and 
a \emph{quantitative} part (find the policy that maximizes the probability of reaching the qualitatively winning states).
Given the qualitative solution, one can iteratively compute the quantitative solution by solving a reachability
problem, where the target is the absorbing set given by the qualitative solution.

Our first contribution is to show that a similar decomposition for B\"uchi properties does not hold for continuous state systems in general: we provide an example of a
Markov process over continuous state space for which the qualitative \emph{winning set} (from which there is a policy that ensures the B\"uchi property holds almost surely)
 is empty but the maximal probability of satisfying the property has a non-zero solution. 
Moreover, we show that such a decomposition is able to provide a lower bound on the quantitative part of the problem. 
Thus, if one can compute (an approximation of) the winning set, a 
lower bound on the quantitative solution can be obtained by Bellman iteration or by other 
techniques for (unbounded) reachability in the continuous-state setting \cite{TMKA17,HS_TAC19}. 


As our second contribution, we provide \emph{symbolic}
algorithms for computing under- and over-approximations of the qualitative winning set.
We compute finite-state abstractions of the continuous-state system. 
Our abstraction uses two transition relations: an over-approximation and an under-approximation of the continuous transitions.
For qualitative probabilistic analysis on finite-state systems, one can replace the probabilistic
transitions by an adversarial scheduler with a fairness requirement \cite{Pnueli83}.
Accordingly, our abstractions are \emph{non-probabilistic} and only require the knowledge of the support 
of the stochastic kernel associated to the process.
We characterize the qualitative winning states as a nested fixed point expression in the $\mu$-calculus \cite{Kozen83};
such an expression naturally gives a symbolic implementation. 
Since the abstraction is non-probabilistic, the symbolic implementation avoids numerical issues and can use standard
encodings based on satisfiability checkers or binary decision diagrams.

Our fixed point characterization is similar to the characterization of qualitative almost-sure winning in concurrent games \cite{dAH00},
but the use of two kinds of transitions---acting as upper and lower bounds---is a key distinguishing ingredient in our 
characterization.
We show through examples why both are required.
The qualitative winning region in the original continuous-state process is \emph{not} characterized 
by a similar fixed point: it is well known
that unlike the finite-state case, the actual probability values matter in deciding qualitative winning in infinite-state systems \cite[pp.~779-780]{BK08}.

We demonstrate our approach on the Van der Pol oscillator and the 3-d Dubin's vehicle, both in the presence of stochastic perturbation.  
Our computation shows that when the disturbance is treated as a worst case adversary, there exists a deterministic value 
of the disturbance for which the specification is violated for all initial states. 
On the contrary, when the disturbance is treated as random, we are able to satisfy the specification almost surely.
Moreover, we empirically show that the difference between the over- and under-approximation reduces as we pick finer abstractions.

%
There are relatively few results on the algorithmic analysis of liveness properties for controlled Markov processes.
A theoretical study of the B\"uchi objective $\square\lozenge B$ is conducted by Tkachev et al. \cite{TMKA17} 
via persistence properties $\lozenge\square \overline{B}$. 
It is shown that $P_s(\lozenge\square \overline{B})$ can be characterized by two fixed-point equations but no computational method is provided.
In particular, their techniques do not provide a way to solve the qualitative B\"uchi problem that we solve.
Our computational approach is similar in nature to results that employ Interval MDP or Interval Markov chains (MC) as abstractions. 
For example, Dutreix and Coogan \cite{Coogan19} use Interval MC for verification of a particular class of systems. 
The method of Dutreix and Coogan requires numerical computations of lower and upper bounds of the probabilities and provides an enumerative algorithm.
Our approach generalizes their construction of the over- and under-approximations to the setting of controlled Markov processes and the
synthesis problem. 
We focus on the winning region of the specification, which allows us to write symbolic algorithms purely on non-probabilistic structures, thus
avoiding numerical optimization procedures. 
Once the winning region is approximated, a quantitative reachability can be solved using standard techniques.

\section{Controlled Markov Processes}


\subsection{Preliminaries}

We consider a probability space $(\Omega,\mathcal F_{\Omega},P_{\Omega})$,
where $\Omega$ is the sample space,
$\mathcal F_{\Omega}$ is a sigma-algebra on $\Omega$ comprising subsets of $\Omega$ as events,
and $P_{\Omega}$ is a probability measure that assigns probabilities to events.
We assume that random variables introduced in this article are measurable functions of the form $X:(\Omega,\mathcal F_{\Omega})\rightarrow (S_X,\mathcal F_X)$.
Any random variable $X$ induces a probability measure on  its space $(S_X,\mathcal F_X)$ as $Prob\{A\} = P_{\Omega}\{X^{-1}(A)\}$ for any $A\in \mathcal F_X$.
We often directly discuss the probability measure on $(S_X,\mathcal F_X)$ without explicitly mentioning the underlying probability space and the function $X$ itself.

A topological space $S$ is called a Borel space if it is homeomorphic to a Borel subset of a Polish space (i.e., a separable and completely metrizable space).
Examples of a Borel space are the Euclidean spaces $\mathbb R^n$, its Borel subsets endowed with a subspace topology, as well as hybrid spaces.
Any Borel space $S$ is assumed to be endowed with a Borel sigma-algebra, which is
denoted by $\mathcal B(S)$. We say that a map $f : S\rightarrow Y$ is measurable whenever it is Borel measurable.

We denote the set of nonnegative integers by $\mathbb N := \{0,1,2,\ldots\}$.


\subsection{Controlled Markov processes}
\label{sec:model}

We consider controlled Markov processes (CMP) in discrete time defined over a general state space, 
characterized by a tuple $\mathfrak S =\left(\mathcal S, \mathcal U, \Ker\right),$
where $\mathcal S$ is a Borel space serving as the state space of the process,
$\mathcal U$ is a finite input space, and 
$\Ker$ is a conditional stochastic kernel $\Ker:\mathcal B(\mathcal S)\times \mathcal S\times \mathcal U\rightarrow[0,1]$
with $\mathcal B (\mathcal S)$  being  the Borel sigma-algebra on the state space and $(\mathcal S, \mathcal B (\mathcal S))$ being the corresponding measurable space.
%
The kernel $\Ker$ assigns to any $s \in \mathcal S$ and $u\in\mathcal U$ a probability measure $\Ker(\cdot | s,u)$ 
on the measurable space $(\mathcal S,\mathcal B(\mathcal S))$
so that for any set $A \in \mathcal B(\mathcal S), P_{s,u}(A) = \int_A \Ker (ds|s,u)$, 
where $P_{s,u}$ denotes the conditional probability $P(\cdot|s,u)$.

\begin{remark}
The input space $\mathcal U$ in general can be any Borel space and the set of valid inputs can be state dependent. We have considered that $\mathcal U$ is a finite set and all elements of this set can be taken at any state. This choice is motivated by the digital implementation of control policies and also facilitates concise presentation of the results.
\end{remark}

\subsection{Semantics of controlled Markov processes}
The semantics of a CMP are characterized by its \emph{paths} or
executions, which reflect both the history of previous states of
the system and of implemented control inputs.
Paths are used to measure the performance of the system.

\begin{definition}
Given a CMP $\mathfrak S$, a \emph{finite path} is a sequence
\begin{equation*}
w_n = (s_0,u_0,\ldots, s_{n-1},u_{n-1},s_n),\quad n\in \mathbb N,
\end{equation*}
where $s_i\in\mathcal S$ are state coordinates and $u_i\in\mathcal U$ are control input coordinates of the path.
The space of all paths of length $n$ is denoted by $\Paths_n:=\mathcal K^n\times\mathcal S$ with $\mathcal K:=\mathcal S\times \mathcal U$.
Further, we denote projections by $w_n[i]:=s_i$ and $w_n(i):=u_i$.
An \emph{infinite path} of the CMP $\mathfrak S$ is the sequence
$w = (s_0,u_0,s_1,u_1,\ldots),$
where $s_i\in\mathcal S$ and $u_i\in\mathcal U$ for all $i\in\mathbb N$.
As above, let us introduce $w[i]:=s_i$ and $w(i):=u_i$.
The space of all infinite paths is denoted by $\Paths_\infty:=\mathcal K^\infty$.
\end{definition}

Given an infinite path $w$ or a finite path $w_n$, we assume below that $s_i$ and $u_i$
are their state and control coordinates respectively, unless otherwise
stated. For any infinite path $w\in \Paths_\infty$,
its $n$-prefix (ending in a state) $w_n$ is a finite path of length $n$, which we also call \emph{$n$-history}.
We are now ready to introduce the notion of control policy. 

\begin{definition}
\label{def:policy}
A \emph{policy} is a sequence $\rho = (\rho_0,\rho_1,\rho_2,\ldots)$ of universally measurable stochastic kernels $\rho_n$ \cite{BS96},
each defined on the input space $\mathcal U$ given $\Paths_n$.
The set of all policies is denoted by $\Pi$.
\end{definition}
Given a policy $\rho\in\Pi$ and a finite path $w_n\in \Paths_n$, the distribution of the next control input $u_n$ is given by
$\rho_n(\cdot|w_n)$.
In this work, we restrict our attention to the class of \emph{stationary} policies.
\begin{definition}
A policy $\rho$ is \emph{stationary} if there is a universally measurable function $\Cont:\mathcal S\rightarrow\mathcal U$ such that at any time epoch $n\in\mathbb N$, the input $u_n$ is taken to be $\Cont(s_n)\in\mathcal U$. Namely, the stochastic kernel $\rho_n(\cdot|w_n),\,n\in\mathbb N,$ in Definition~\ref{def:policy}  is a Dirac delta measure centered at $\Cont(s_n)$ with $s_n = w_n[n]$ being the last element of $w_n$.
We denote the class of stationary policies by $\Pi_{S}\subset\Pi$ and a stationary policy just by the function
$\Cont\in\Pi_S$. The function $\Cont$ is also called \emph{state feedback controller} in control theory. 
\end{definition}

For a CMP $\mathfrak S$, any policy $\rho\in \Pi$ together with an initial probability measure $\alpha:\mathcal{B}(\mathcal S)\rightarrow[0,1]$ of the CMP induces a unique probability measure on the canonical sample space of paths \cite{hll1996} denoted by $P_\alpha^\rho$ with the expectation $\mathbb E_\alpha^\rho$.
In the
case when the initial probability measure is supported on a single point,
i.e., $\alpha({s}) = 1$, we write $P_s^{\rho}$ and $\mathbb E_s^{\rho}$ in place of $P_\alpha^\rho$ and $\mathbb E_\alpha^\rho$, respectively.
We denote the set of probability measures on $(\mathcal S,\mathcal B(\mathcal S))$ by $\mathfrak D$.




\section{Problem Definition}
\label{sec:problem}
\noindent\textbf{Liveness specification.}
We consider liveness or repeated reachability specification as the synthesis objective.
Given a measurable set of states $B\subseteq \mathcal{S}$, 
the liveness specification is denoted by
$\square\lozenge B$ in linear temporal logic (LTL) notation \cite{BK08}.
An infinite path $w\in\Paths_\infty$ of a CMP $\mathfrak S$ satisfies the liveness specification $\square\lozenge B$  if for all $k_0\in \mathbb{N}$, there exists $k_1\in \mathbb{N}$ such that $k_1 > k_0$ and $w[k_1] \in B$.
This requires that the path $w$ visits the set $B$ infinitely often. 
We indicate the set of all infinite paths $w\in\Paths_\infty$ of $\mathfrak S$ that satisfy the property $\square\lozenge B$ by $\Sys\models \square\lozenge B$.

We are interested in the probability that the liveness specification can be satisfied by paths of a CMP $\mathfrak S$ under different policies. Given a policy $\rho\in\Pi$ and initial state $s\in\mathcal S$, we define the satisfaction probability as
\begin{equation}
\label{eq:sat_prob}
f(s,\rho) := P_s^{\rho}(\mathfrak S\models \square\lozenge B),
\end{equation}
and the supremum satisfaction probability
\begin{equation}
\label{eq:optimal_prob}
f^\ast(s) := \sup_{\rho\in\Pi} P_s^{\rho}(\mathfrak S\models \square\lozenge B).
\end{equation}

\begin{problem}[Policy Synthesis]
	\label{prob:non-trivial liveness}
	Given $\Sys$ and a set $B\subseteq \mathcal{S}$, find the optimal policy $\rho^*$ along with $f^\ast(s)$ s.t.\ $P_s^{\rho^*}(\mathfrak S\models \square\lozenge B) = f^\ast(s)$.
\end{problem}


Measurability of the event $\{\mathfrak S\models \square\lozenge B\}$ in the canonical sample space of paths under the probability measure $P_s^{\rho}$ is proved in \cite{TMKA17}. An initial attempt is also made to study the properties of the function $f^\ast(\cdot)$. For instance, it is shown that $f^\ast\equiv 1$ if and only if the probability that the path $w$ reaches $B$ is one for all initial states. 
We anticipate that the sets where $f^\ast(s)=1$ plays a crucial role in the computation of $f^\ast$.
\begin{definition}[Almost sure winning region]
Given the CMP $\mathfrak S$, the policy $\rho$, and a set $B\subseteq \mathcal{S}$, the state $s\in\mathcal S$ wins the specification $\square\lozenge B$ almost surely (a.s.\ in short) if $f(s,\rho)=1$. The a.s.\ \emph{winning region} of the policy $\rho$ is defined as
\begin{equation}
	  \W(\Sys,\rho):= \set{ s\in\mathcal S \mid f(s,\rho) = 1}.
\end{equation}
We also define the \emph{maximal a.s.\ winning region} as
\begin{equation}
	 \W^\ast(\Sys):= \set{ s\in\mathcal S \mid f^\ast(s) = 1}.
\end{equation}
\end{definition}

\begin{proposition}\label{prop:measurability of windom}
The set $\W^\ast(\Sys)$ is universally measurable. The set $\W(\Sys,\Cont)$ is also universally measurable for any stationary policy $\Cont\in\Pi_S$. The set $W := \W(\Sys,\Cont)$ is an absorbing set, i.e., the paths stating from this set will stay in the set with probability one.
\end{proposition}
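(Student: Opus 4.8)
The plan is to handle the statement in two independent pieces: first the universal measurability of the three sets, then the absorbing property of $W := \W(\Sys,\Cont)$.

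For measurability I would start from the observation (already used in \cite{TMKA17}) that the B\"uchi event is the tail Borel set $\{\Sys\models\square\lozenge B\}=\bigcap_{k\in\mathbb N}\bigcup_{n\ge k}\{w\in\Paths_\infty \mid w[n]\in B\}$. Fix a stationary $\Cont\in\Pi_S$. By the Ionescu--Tulcea construction (as in \cite{hll1996}), $P_s^{\Cont}$ is the law of the Markov chain started at $s$ with the universally measurable kernel $s\mapsto\Ker(\cdot\mid s,\Cont(s))$. Each finite-horizon probability $s\mapsto P_s^{\Cont}\bigl(\exists\,n\in[k,m]:\ w[n]\in B\bigr)$ is produced by finitely many integrations against this kernel, hence is universally measurable, and $f(\cdot,\Cont)$ is the monotone limit of these functions as $m\to\infty$ and then $k\to\infty$; so $f(\cdot,\Cont)$ is universally measurable, and $\W(\Sys,\Cont)=\{s\mid f(s,\Cont)=1\}$ is universally measurable because preimages of Borel sets under universally measurable maps are universally measurable. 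For $\W^\ast(\Sys)$ I would invoke the Borel-model optimal control machinery of \cite{BS96}: the indicator of the Borel acceptance set is lower semianalytic, hence $f^\ast=\sup_{\rho\in\Pi}P_\cdot^{\rho}(\Sys\models\square\lozenge B)$ is lower semianalytic and in particular universally measurable, so $\W^\ast(\Sys)=\{s\mid f^\ast(s)=1\}$ is universally measurable.

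For the absorbing property I would exploit that $\square\lozenge B$ is shift-invariant: a path $w$ satisfies it iff its one-step suffix $(w[1],w(1),w[2],\ldots)$ does, since dropping a single initial state cannot change whether $B$ is visited infinitely often. Writing $T(\cdot\mid s):=\Ker(\cdot\mid s,\Cont(s))$ and combining shift-invariance with the Markov property of $P_s^{\Cont}$ at time $1$ gives, for every $s\in\mathcal S$,
\[
 f(s,\Cont)=P_s^{\Cont}(\Sys\models\square\lozenge B)=\int_{\mathcal S} f(s',\Cont)\,T(ds'\mid s)=\mathbb E_s^{\Cont}\bigl[f(w[1],\Cont)\bigr].
\]
If $s\in W$ the left-hand side equals $1$; since $0\le f(\cdot,\Cont)\le 1$ pointwise, this forces $f(w[1],\Cont)=1$ for $P_s^{\Cont}$-almost every path, i.e.\ $T(W\mid s)=1$, so $P_s^{\Cont}(w[1]\in W)=1$ for every $s\in W$. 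Iterating this one step at a time and using the Markov property yields $P_s^{\Cont}(w[n]\in W)=1$ for every $n\in\mathbb N$ and every $s\in W$, and since $\{w\mid \forall n\ w[n]\in W\}$ is a countable intersection of $P_s^{\Cont}$-conull events it is itself conull; hence $W$ is absorbing.

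The step I expect to be the main obstacle is the universal measurability of $f^\ast$: for a fixed stationary policy (or a fixed finite horizon) measurability is elementary, but the supremum over all history-dependent, randomized, universally measurable policies need not preserve Borel measurability, so one genuinely has to go through the semianalytic-set apparatus of \cite{BS96} (or quote the corresponding statement from \cite{TMKA17}). Everything else—measurability of $f(\cdot,\Cont)$ and the absorbing argument—is routine once the displayed one-step identity is justified, which itself only requires $f(\cdot,\Cont)$ to be measurable so that the integral is well defined.
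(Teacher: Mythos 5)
Your proposal is correct and follows essentially the same route as the paper: universal measurability of $\W^\ast(\Sys)$ via semianalyticity of the optimal value function (deferring to the machinery of \cite{TMKA17,BS96}), and the absorbing property via the one-step identity $f(s,\Cont)=\int_{\mathcal S} f(s',\Cont)\,\Ker(ds'\mid s,\Cont(s))$ followed by the observation that a nonnegative function with zero integral vanishes almost everywhere (which the paper makes explicit with Markov's inequality). One small correction: taking a supremum over policies preserves \emph{upper} semianalyticity rather than lower, so the accurate statement---and the one the paper uses---is that $1-f^\ast$ is lower semianalytic; since $\{f^\ast=1\}=\bigcap_n\{f^\ast>1-c_n\}$ is then a countable intersection of analytic sets and every analytic set is universally measurable, your conclusion is unaffected.
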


\ifmaximal
	The proof can be found in the appendix.
\else
	The proof can be found in the longer version of this paper \cite{majumdar2019symbolic}.
\fi

In the sequel, we restrict our attention to stationary policies $\Cont\in\Pi_S$ and decompose the computation of $P_s^{\Cont}(\mathfrak S\models \square\lozenge B)$ into the computation of the winning set $\W(\Sys,\Cont)$ and then computation of reachability probability $P_s^{\Cont}(\mathfrak S\models \lozenge \W(\Sys,\Cont))$.
%
This is formalized next.

\begin{assumption}\label{ass:markov policy}
	There exists an optimal policy for \eqref{eq:optimal_prob}, and in addition that policy is stationary.
	Formally, there is a stationary policy $\Cont^\ast\in\Pi_S$ such that $f^\ast(s) = f(s,\Cont^\ast)$ for all $s\in \mathcal{S}$.
\end{assumption}

It is known, already for CMPs with countable state spaces, that for \REFprob{prob:non-trivial liveness} an 
optimal policy from a state $s\in \mathcal{S}$ may not exist even when $f^\ast(s)$ is non-zero \cite{kiefer2019b}.
As stated in the first part of \REFass{ass:markov policy}, we restrict the scope of the rest of the paper to problems for which an optimal policy exists.
For the second part of \REFass{ass:markov policy}, we do not yet know if stationary policies are sufficient for general CMPs even when an optimal policy is 
guaranteed to exist.
We do know that for CMPs with \emph{countable} state spaces, when an optimal policy exists, then it is stationary \cite{kiefer2017parity}.
If optimal policies need not be stationary,
our proposed algorithms in the following sections would still produce sound but sub-optimal approximations of the set 
$\W^\ast(\Sys)$; see \REFrem{rem:wrong assumption}.
Nevertheless, even if stationary policies are not sufficient for optimality, they are still of practical interest due to their simple structure and ease of implementation.
It is worthwhile to note that similar assumption was used by other authors \cite[Assump.~2]{TMKA17}.

\begin{theorem}
\label{thm:the problem can be broken down}
For any policy $\Cont\in\Pi_S$ on CMP $\Sys$, 
and $W := \W(\Sys,\Cont)$, we have
\begin{equation}
\label{eq:decom}
\begin{cases}
P_s^{\Cont}(\mathfrak S\models \square\lozenge B) = 1 & \text{if } s\in W\\
P_s^{\Cont}(\mathfrak S\models \square\lozenge B) \ge P_s^{\Cont}(\mathfrak S\models \lozenge W) & \text{if } s\notin W,
\end{cases}
\end{equation}
\end{theorem}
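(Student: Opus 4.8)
The plan is to split on the two cases of \eqref{eq:decom}; the first is a restatement of definitions, and the second reduces to the strong Markov property applied at the first hitting time of $W$, combined with the shift-invariance of the B\"uchi event.

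For $s\in W$ there is nothing to prove: by definition $W=\W(\Sys,\Cont)=\{s\in\mathcal S\mid f(s,\Cont)=1\}$ and $f(s,\Cont)=P_s^{\Cont}(\mathfrak S\models\square\lozenge B)$, so $P_s^{\Cont}(\mathfrak S\models\square\lozenge B)=1$. For $s\notin W$, I would introduce the hitting time $\tau_W(w):=\inf\{n\in\mathbb N\mid w[n]\in W\}$ with $\inf\emptyset=\infty$. Since $W$ is universally measurable by \REFprop{prop:measurability of windom} and each coordinate projection $w\mapsto w[n]$ is measurable, $\tau_W$ is a universally measurable stopping time with respect to the natural filtration of $\Paths_\infty$, and $\{\mathfrak S\models\lozenge W\}=\{\tau_W<\infty\}$. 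The crucial structural fact is that the B\"uchi event $G:=\{\mathfrak S\models\square\lozenge B\}$ is invariant under finite shifts: a path visits $B$ infinitely often if and only if any of its suffixes does. Hence, on $\{\tau_W<\infty\}$, the path $w$ lies in $G$ if and only if the suffix of $w$ starting at time $\tau_W$ lies in $G$.

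Next I would invoke the (strong) Markov property for the time-homogeneous Markov chain on $\mathcal S$ induced by the stationary policy $\Cont$, using the canonical construction of \cite{hll1996,BS96}: conditioned on the pre-$\tau_W$ history and on $\{\tau_W<\infty\}$, the suffix process starting at $\tau_W$ is distributed as $P^{\Cont}_{w[\tau_W]}$. Combining this with shift-invariance of $G$ yields
\[
P_s^{\Cont}(G\cap\{\tau_W<\infty\})=\mathbb E_s^{\Cont}\bigl[\mathbf 1\{\tau_W<\infty\}\,P^{\Cont}_{w[\tau_W]}(G)\bigr]=\mathbb E_s^{\Cont}\bigl[\mathbf 1\{\tau_W<\infty\}\,f(w[\tau_W],\Cont)\bigr].
\]
On $\{\tau_W<\infty\}$ we have $w[\tau_W]\in W$, so $f(w[\tau_W],\Cont)=1$, and the right-hand side equals $P_s^{\Cont}(\tau_W<\infty)=P_s^{\Cont}(\mathfrak S\models\lozenge W)$. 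Since trivially $P_s^{\Cont}(G)\ge P_s^{\Cont}(G\cap\{\tau_W<\infty\})$, the desired inequality follows.

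The only delicate point is the measurability and strong-Markov bookkeeping on a general Borel state space with universally measurable kernels and policies: that $\tau_W$ is a bona fide stopping time (where universal measurability of $W$ from \REFprop{prop:measurability of windom} is exactly what is needed) and that conditioning at the random time $\tau_W$ is legitimate for this class of processes. Everything else---the two displayed identities and shift-invariance of $\square\lozenge B$---is routine. Note finally that absorption of $W$, also from \REFprop{prop:measurability of windom}, is not strictly required for the inequality, though it fits the intuitive picture that once a path reaches $W$ it remains there and fulfils the B\"uchi condition almost surely.
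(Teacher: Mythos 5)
Your proposal is correct and follows essentially the same route as the paper's own proof: both decompose by the first hitting time of $W$, use the Markov property together with shift-invariance of $\square\lozenge B$ and the fact that $f(\cdot,\Cont)=1$ on $W$ to identify $P_s^{\Cont}(G\cap\{\tau_W<\infty\})$ with the reachability probability, and discard the non-negative remainder. The paper phrases this as a sum over the deterministic events $\{\tau=n\}$ rather than invoking the strong Markov property at the random time, but the two are the same argument; your version is merely more explicit about the measurability bookkeeping.
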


\ifmaximal
	The proof can be found in the appendix.
\else
	The proof can be found in the longer version of this paper \cite{majumdar2019symbolic}.
\fi



Computation of the reachability probability has been studied extensively in the literature for both infinite horizon \cite{TA12,tkachev2014characterization,TMKA17,HS_TAC19} and finite horizon \cite{SA13,SAH12,LAB15,Kariotoglou17,LO17,SAM17,Vinod18,LAVAE19,Jagtap2019} using different abstract models and computational methods. 
These approaches can be used to provide a lower bound on the probability of satisfaction of the B\"uchi condition.
So from this point onward, we mostly consider the first half of \eqref{eq:decom} which is formalized as in the following:

\begin{problem}[Maximal Winning Region]
\label{prob:policy}
Given $\Sys$ and a set $B\subseteq \mathcal{S}$, find a stationary policy $\Cont^\ast\in \Pi_S$ such that
$$\W(\Sys,\Cont)\subseteq \W(\Sys,\Cont^\ast) \text{ almost everywhere}$$
for all $\Cont\in\Pi_S$.
\end{problem}
The maximal policy $\Cont^\ast$ defined in Problem~\ref{prob:policy} is not necessarily unique, but the winning region associated to such maximal policies is unique.
\ifmaximal 
	A formal treatment of this claim can be found in the appendix in \REFsec{sec:unique winning region}.
\else
	A formal treatment of this claim can be found in the longer version of this paper \cite{majumdar2019symbolic}.
\fi 
Moreover, when \REFass{ass:markov policy} holds, 
using properties of the sets $\W(\Sys,\Cont)$ for different policies, we have
$$\W(\Sys,\Cont^\ast) = \W^\ast(\Sys).$$


In the following, we mainly focus on the approximate computation of $\W(\Sys,\Cont^\ast)$ with a suitable policy.
%
For the reachability part, it has been shown that linking the infinite-horizon reachability to the finite-horizon one requires knowledge of the \emph{absorbing sets} from which the trajectory cannot escape.
So we briefly discuss in \REFsec{sec:minimal losing region} how the absorbing sets can be over-approximated to enable linking the reachability in \eqref{eq:decom} to its finite-horizon version.


\noindent\textbf{A solution outline for Problem~\ref{prob:policy}.}
Computing the exact maximal policy for $\Sys$ is difficult in general.
In the following we propose an approximation procedure which is motivated by the three-step abstraction-based controller synthesis methods \cite{ReissigWeberRungger_2017_FRR,tabuada09,nilsson2017augmented}:
\begin{description}
	\item[Abstraction.] First, the given CMP $\Sys$ is approximated using a finite state transition system $\Abs$, called the \emph{abstraction}, by means of state space discretization.
	The specification---which is the B\"uchi condition in our case---is also approximated using the discretized state space of the abstract transition system, and is called the \emph{abstract specification}.
	\item[Synthesis.] Second, the policy synthesis problem is posed as a zero-sum \emph{game} on $\Abs$ between the controller and an adversary, where in each step the controller chooses a control input, and the adversary chooses a successor allowed by the transitions in $\Abs$.
	The goal of the control player is a.s.\ satisfaction of the B\"uchi objective from as many discrete states as possible, whereas the goal of the environment player is the complement of the same.
	The outcome of the game, when played from the perspective of the control player, is an \emph{abstract controller} $\Conth$.
	\item[Controller refinement.] Third, the abstract controller $\Conth$ is mapped back to the continuous state space using a process called \emph{controller refinement}.
	This results in a continuous controller $\Cont$ that can be paired with $\Sys$ and a continuous winning domain $\W(\Sys,\Cont)$.
\end{description}


\section{A generic finite state abstraction}\label{sec:abstraction}

Our proposed solution relies on constructing an abstraction $\Abs$ which uses two transition functions to approximate the transition kernel of $\Sys$.
\begin{definition}
A transition system is a tuple $(Q,\Sigma,\delta_1,\delta_2)$ where $Q$ is a finite set of states, $\Sigma$ is a finite input alphabet, and $\delta_i:Q\times \Sigma \rightarrow 2^Q$, $i\in\set{1,2}$ are two transition functions with the property that $\delta_1(q,u)\supseteq \delta_2(q,u)$ for all $(q,u)\in Q\times \Sigma$.
\end{definition}

The abstraction $\Abs$ is constructed based on a finite partition of the state space. Therefore, we require a bounded state space $\Xs$. 
If $\Xs$ is unbounded, we truncate it to a measurable set $\Xs'$, which serves as the working region; the rest is represented by a symbolic \emph{sink} state $\phi$.
It is true that this truncation could possibly lead to computation of only an under-approximation of the true a.s.\ winning region.
However, from practical standpoint, it is often the case that bounds of the state variables are already in place (e.g.\ the work space of a robot, or the maximum rated voltage across a capacitor in a voltage converter, etc.), which would serve as the boundary of $\Xs'$.
The state $\phi$ models all the out-of-domain behaviors of $\Sys$, and should be avoided.
The new CMP will be $\mathfrak S' := (\mathcal S'\cup\set{\phi},\mathcal U,\Ker')$ with
\begin{equation}
\Ker'(A\,|\,s,u) = \begin{cases}
\Ker(A\,|\,s,u) & \text{if } s\in \mathcal S'\\
0 & \text{if }  s = \phi,
\end{cases}
\end{equation}
for any $A\in\mathcal B(\mathcal S')$, and $\Ker'(\phi\,|\,s,u) = 1-\Ker'(\mathcal S'\,|\,s,u)$.
In order to avoid change of notation, we work in the sequel with $\mathfrak S := (\Xs,\mathcal U,\Ker)$ where the state space $\Xs$ is bounded but may also include a symbolic sink state $\phi$. We also assume that $B\subseteq \mathcal{S}$.

\subsection{The abstraction}

We propose a novel abstraction, which will later be used in Sec.~\ref{sec:synthesis} to synthesize approximations of the maximal winning region and a winning policy.
First, we introduce some notation.
Given the state space $\Xs$ of the CMP $\Sys$, we define a finite partition of $\Xs$ denoted by
 $\Xh := \set{\xh_i}_{i\in I}$ s.t.\ $\Xs = \bigcup_{i\in I} \xh_i $ and $\xh_i\cap \xh_j=\emptyset$ for all $\xh_i,\xh_j\in \Xh$, $i\neq j$.
The set $\Xh$ will be called the abstract state space (state space of the abstraction), and each element $\xh_i$ is an abstract state.
\begin{remark}
For the theory that is going to be presented in this paper, the abstract states need not be of the same size.
However, for a practical implementation, partition sets are chosen to be hyper-rectangular of the form 
$\xh = \cell{a,b}$ where $a,b\in \Xs$ are vectors. The partition sets are uniformly sized and their boundaries are assigned to only one partition element. 
In our implementation, we have also used a hyper-rectangular state space with an additional symbolic state $\phi$ that is also an element of the abstract state space.
\end{remark}

\begin{definition}\label{def:abstraction}
	Let $\Sys = (\Xs,\Us,\Ker)$ be a CMP and $\Xh$ be a finite partition of $\Xs$.
	Suppose $\Abs = (\Xh,\Us,\Fo,\Fu)$ is a transition system 
	with two modes of transitions $\Fo:\Xh\times \Us\rightarrow 2^{\Xh}$ and $\Fu:\Xh\times \Us\rightarrow 2^{\Xh}$.
	The system $\Abs$ will be called an \emph{abstraction} of $\Sys$ if for all $\xh\in \Xh$ and all $u\in \Us$,
	\begin{align}
		\Fo(\xh,u) &\supseteq \set{\xh'\in\Xh \mid \exists s\in \xh\;.\;\Ker(\xh'\mid s,u) > 0},\label{eq:def fo}\\
		\Fu(\xh,u) &\subseteq \set{\xh'\in\Xh\mid \exists \varepsilon>0\;.\; \forall s\in \xh\;.\;\Ker(\xh'\mid s,u) \geq \varepsilon}.\label{eq:def fu}
	\end{align}
\end{definition}



In words, in the presence of the stochastic disturbance and given an abstract state $\xh$ and a control input $u\in \Us$, $\Fo(\xh,u)$ represents an over-approximation of the set of all abstract states which can be reached with positive probability from \emph{some} continuous state in $\xh$, and $\Fu(\xh,u)$ represents a subset of $\Fo(\xh,u)$ under-approximating the set of states that can be reached with probability bounded away from $0$ from \emph{all} the states in $\xh$.
The role of the lower bound $\varepsilon$ in \eqref{eq:def fu} will be clear in \REFsec{sec:a.s. progress}.
We defer the actual computation of the abstract transition system until Sec.~\ref{sec:abs computation}.

\begin{remark}\label{rem:two-adversaries}
	The abstract transition systems in the usual abstraction based control methods \cite{ReissigWeberRungger_2017_FRR,tabuada09,nilsson2017augmented} play the role of a game graph for a two-player zerosum game between the controller and an imaginary adversary, where the adversary is an accumulation of the external perturbation and the discretization-induced non-determinism.
	The rule of the game is that at each discrete step, the controller plays a control input, to which the adversary responds by choosing one of the many non-deterministic successors.
	A control policy is synthesized for the controller by treating the adversary actions in worst-case fashion.

	In our case, the controller effectively plays simultaneously against \emph{two} imaginary adversaries who use two different types of actions: 
	The first adversary---called the random adversary---uses the external random noise, while the second adversary---called the non-deterministic adversary---uses the discretization-induced non-determinism.
	This separation enables the controller to somewhat relax the worst-case treatment of the problem, by assuming that the random adversary 
	is fair in choosing its actions, meaning all the noise values in the support of the distribution will appear always eventually.
	The non-deterministic adversary is still treated in the worst-case fashion.
	
	Keeping this two-adversary interpretation in perspective, given some control action, one can interpret a $\Fo$-transition as a joint colluding move of the two adversaries, while one can interpret a $\Fu$-transition as a move of the lone random adversary.
	The rule of the game in our case is that at each discrete step, the controller plays a control input, to which either the adversaries jointly respond by choosing one of the  $\Fo$-successors (which is possibly not an $\Fu$-successor), or the random adversary independently chooses an $\Fu$-successor.
	Since the random adversary is fair in its moves, hence it will not collude with the non-deterministic adversary all the time, and all the $\Fu$-transitions will be chosen at some point in the long run.
	This additional fairness assumption in the underlying game creates a favorable condition for the controller in many cases, as will be shown in the next section.
\end{remark}


\subsection{Almost sure progress}\label{sec:a.s. progress}
The fairness of the random adversary is materialized using the $\varepsilon$ in the definition of $\Fu$, which guarantees that a trajectory eventually exits from an abstract state $\xh$ in the long run, even when there is a non-zero probability for a single-step successor of a continuous state $s\in \xh$ to stay within $\xh$.
This feature is a central element of our synthesis method that will be presented in \REFsec{sec:synthesis}.

Following is an example of a general continuous state Markov chain which demonstrates that in the absence of the bound $\varepsilon$ in the definition of $\Fu$, trajectories could get trapped inside $\xh$ forever.

\begin{example}
\label{ex:trapped transitions}
	For simplicity and a focused exposition of the actual issue, we use a system admitting no control over state trajectories (can be alternatively thought of as a system with a control input space that has a single element).
	Consider a one dimensional CMP with state space $\Xs = [0,2]$, and with the following transition kernel (see \REFfig{fig:trapped trajectory}): 
	\begin{align}
	&s_k\in [1,2] \Rightarrow\;\Ker([\alpha,\beta]\mid s_k) = (\beta-\alpha),\nonumber\\
	&s_k=0 \Rightarrow\;\begin{cases}\label{eq:ex_trapped}
									&\Ker\left(\set{0}\mid s_k\right) = 0.5\\
									&\Ker\left([\alpha,\beta]\mid s_k\right) = 0.5(\beta-\alpha),
								\end{cases}\\
		&s_k\in (0,1) \Rightarrow\;\begin{cases}
										&\Ker\left(\set{b(s_k)} \;\vert\; s_k\right) = 1-a(s_k)\\
										&\Ker\left([\alpha,\beta]\mid s_k\right) = a(s_k)(\beta-\alpha),
									\end{cases}\nonumber
	\end{align}
	for any $[\alpha,\beta]\subset [1,2]$ with $\alpha\le\beta$, and $a:s_k \mapsto [0,1]$ is some probability assigning function, and  $b(s_k) := \frac{s_k}{1+s_k}$. In words, the next state is uniformly distributed over the interval $[1,2]$ if the current state of the CMP is in the same interval; for current state $s_k=0$, the next state either stays at zero with probability $0.5$ or jumps uniformly to the interval $[1,2]$; for current state $s_k\in(0,1)$, the next state jumps to the interval $[1,2]$ with probability $a(s_k)$ or jumps to a single state $b(s_k)$ with probability $1-a(s_k)$.	
	\begin{figure}
	{\scriptsize
		\begin{tikzpicture}[scale=4,->,>=stealth',shorten >=1pt,auto,semithick,every state/.style={semithick,fill=black,scale=0.15}]
			\draw[-]	(0,0)	--	(2,0);
			\draw[-]	(0.05,-0.1)	--	(0,-0.1)	--	(0,0.1)	--	(0.05,0.1);
			\draw[-]	(2-0.05,-0.1)	--	(2,-0.1)	--	(2,0.1)	--	(2-0.05,0.1);
			\draw[-]	(1,-0.1)	--	(1,0.1);
			
			\node[state,label=left:{\color{red} $0$}]	(A)	at	(0,0)	{};
			\node[state,label=above right:{\color{red} $\frac{1}{2}$}]	(B)	at	(0.5,0)	{};
			\node[state,label=below:{\color{red} $\frac{1}{3}$}]	(C)	at	(0.333,0)	{};
			\node[state,label=below:{\color{red} $\frac{1}{4}$}]	(D)	at	(0.25,0)	{};
			\node[state,label=below:{\color{red} $\frac{1}{5}$}]	(E)	at	(0.2,0)		{};
			\node[state,label=below:{\color{red} $\frac{1}{6}$}]	(F)	at	(0.166,0)	{};
			\node[state,,label=below:{\color{red} $\frac{1}{7}$}]	(G)	at	(0.142,0)	{};
			
			\node[state,label=below left:{\color{red} $1$}]	(H)	at	(1,0)	{};
			\node			(G)	at	(1.5,0)	{};
			\node[label=below left:{\color{red} $2$}]	at	(2,0)	{};
			
			\path	(A)	edge[loop above]		node	{$0.5$}	()
					(A)	edge[bend left]		node	{$0.5$}	(G)
					(B)	edge[bend left]		node	{$0.75$}	(C)
						edge[bend right]		node[below]	{$0.25$}	(G)
					(C)	edge[bend right]	node[above]	{$0.889$}	(D)
						edge[bend left]	node[below]	{$0.111$}	(G)
					(H)	edge[bend left]	node[below]	{$1$}	(G)
					(G)	edge[loop above]	node[above]	{$1$}	();
		\end{tikzpicture}
	}
	\caption{A trajectory could get trapped inside $[0,1)$ for infinite time if the transition probabilities are allowed to be arbitrarily small.
	In the figure, the nodes with red labels are continuous states, and the labels on the edges are the probability of the associated transition.}
	\label{fig:trapped trajectory}
	\end{figure}		
	
	Let us consider two CMPs $\mathfrak S_1,\mathfrak S_2$ with kernels obtained respectively by $a(s) = s^2$ and $a(s)=0.5$ for all $s\in(0,1)$, and compute the probability $P_s(\mathfrak S_i\models \square\lozenge [0,1))$. For a trajectory starting from an initial state $s_0\in (0,1)$, the probability of staying inside $[0,1)$ is:
	\begin{align*}
		&P_{s_0}(\mathfrak S_1\models \square[0,1))\\
		=&\left( 1-s_0^2 \right)\cdot \left( 1-\left(\frac{s_0}{1+s_0}\right)^2 \right)\cdot \left( 1-\left(\frac{\frac{s_0}{1+s_0}}{1+\frac{s_0}{1+s_0}}\right)^2 \right) \ldots \\
		 = &\left( 1-s_0^2 \right)\cdot \left( 1-\frac{s_0^2}{(1+s_0)^2} \right)\cdot \left( 1-\frac{s_0^2}{(1+2s_0)^2} \right) \ldots \\
		= &\prod_{k=0}^\infty \left( 1-\frac{s_0^2}{(1+ks_0)^2} \right) \\
		= &1-s_0.
	\end{align*}
	Then, for any $s_0\in (0,1)$ there is a non-zero probability of staying forever inside $[0,1)$.
	For example, for $s_0=0.5$, this probability is $0.5$. Doing the computations for other states results in
	\begin{equation*}
	P_s(\mathfrak S_1\models \square\lozenge [0,1)) = 
	\begin{cases}
	1-s & \text{if } s\in(0,1)\\
	0 & \text{if } s\in[1,2]\cup\set{0}.
	\end{cases}
	\end{equation*}
For the second model with $a(s)=0.5$, we have $P_{s}(\mathfrak S_2\models \square\lozenge[0,1))=0$ for all $s\in[0,2]$.
\qed
\end{example}

\REFex{ex:trapped transitions} clearly shows that unlike discrete MDPs, in case of continuous-space CMPs we cannot ignore the actual value of the probabilities, as otherwise we would have had $P_{s}(\mathfrak S_i\models \square[0,1))=0$ for both $i\in\set{1,2}$. 
So unlike the case of discrete MDPs, we can no longer just use the support of the distribution to find the winning region, and then solve a reachability problem. 

\REFex{ex:trapped transitions} also shows that the inequality in \eqref{eq:decom} can be strict: the winning region is empty but there are states with positive probability of satisfying the liveness specification. We leave the formulation of conditions under which the equality holds as future work.

The CMP $\mathfrak S_1$ in \REFex{ex:trapped transitions} also justifies the use of $\varepsilon$ in the definition of $\Fu$. Assume that we want to compute an abstraction using the hyper-rectangular cover $\set{[0,1),[1,2]}$. The continuous states in the cell $[0,1)$ have positive transition probability to $[1,2]$, although there \emph{does not exist} a uniform lower bound $\varepsilon>0$ of these transition probabilities. 
As we showed in \REFex{ex:trapped transitions}, even then a trajectory can remain trapped inside $[0,1)$ with a positive probability in the long run. Because of the presence of the $\varepsilon$ in the definition of $\Fu$, the abstraction will have $\Fu([0,1)) = \emptyset$ (there is no control input, so $\Fu$ takes only state as argument).

\begin{proposition}\label{prop:finite escape}
	Let $\xh\in \Xh$ be an abstract state and $u\in U$ be an input s.t.\ there exists $\xh'\in \Xh$, $\xh' \neq \xh$, such that $\xh'\in\Fu(\xh,u)$.
	Then, there is a policy $\Cont$ such that 
	$$P_s^{\Cont}(\mathfrak S \models \lozenge\square \xh) = P_s^{\Cont}(\mathfrak S \models \square \xh) = 0,\quad\forall s\in \Xs,$$
	i.e.\ the probability of an infinite trajectory getting trapped forever inside $\xh$ is $0$.
\end{proposition}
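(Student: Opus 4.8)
The plan is to exploit the uniform lower bound $\varepsilon$ guaranteed by the hypothesis $\xh' \in \Fu(\xh,u)$, which by \eqref{eq:def fu} means there is $\varepsilon > 0$ with $\Ker(\xh' \mid s,u) \ge \varepsilon$ for \emph{all} $s \in \xh$. The controller $\Cont$ I would construct simply plays $u$ whenever the current state lies in $\xh$ (and plays arbitrarily otherwise, since the claim only concerns escape from $\xh$). Under this policy, from any state $s \in \xh$, in one step the trajectory lands in $\xh'$ with probability at least $\varepsilon$, and $\xh' \cap \xh = \emptyset$ because $\Xh$ is a partition and $\xh' \ne \xh$; hence the trajectory leaves $\xh$ in one step with probability at least $\varepsilon$, uniformly over the starting state in $\xh$.

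The core of the argument is then a standard geometric-trapping estimate. Fix any $s \in \Xs$ and consider the event $\square\xh$, i.e. $w[k] \in \xh$ for all $k \in \mathbb{N}$; if $s \notin \xh$ this event is empty under $P_s^\Cont$ (at least at $k=0$), so assume $s \in \xh$. For the trajectory to stay in $\xh$ through step $n$, it must at each of the first $n$ steps fail to jump to $\xh'$. By the Markov property and the uniform bound, conditioning successively,
\begin{equation*}
P_s^{\Cont}(w[0],\ldots,w[n] \in \xh) \le (1-\varepsilon)^n.
\end{equation*}
Letting $n \to \infty$ gives $P_s^{\Cont}(\square\xh) = 0$. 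For the $\lozenge\square\xh$ statement, write $\lozenge\square\xh = \bigcup_{m \in \mathbb{N}} \{w[k] \in \xh \ \forall k \ge m\}$, a countable union; applying the same one-step-escape estimate started from time $m$ (using stationarity of $\Cont$ on $\xh$ and the Markov property at time $m$) shows each set in the union has probability $0$, and countable subadditivity finishes it. The displayed equation in the statement follows since $\square\xh \supseteq \lozenge\square\xh$ forces the two probabilities to coincide at $0$, and both are $\ge 0$.

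The only genuinely delicate point is the measurability/formalization of the conditioning step: one should phrase the one-step bound as $P_s^{\Cont}(w[k+1] \notin \xh \mid \mathcal{F}_k) \ge \varepsilon$ on the event $\{w[k] \in \xh\}$, which is exactly the content of $\Ker(\xh' \mid \cdot, u) \ge \varepsilon$ pushed through the definition of $P_s^{\Cont}$ on the canonical path space, and then iterate via the tower property. I expect this bookkeeping with the induced path measure to be the main (though routine) obstacle; the probabilistic content is elementary. One should also note explicitly that the constructed $\Cont$ can be taken stationary, since the hypothesis fixes a single input $u$ for the state $\xh$, so the refined policy is in $\Pi_S$ as needed for later use.
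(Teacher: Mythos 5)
Your proposal is correct and follows essentially the same route as the paper: both exploit the uniform one-step escape probability $\varepsilon$ from the definition of $\Fu$, fix the constant policy $\Cont(s)=u$ on $\xh$, and derive the geometric bound $(1-\varepsilon)^n$ on remaining in $\xh$ for $n$ steps. The only cosmetic difference is that the paper phrases the iteration via value functions $V_n$ and $W_n$ while you use the tower property and a countable union over the starting time $m$ for $\lozenge\square\xh$; these are equivalent (and note your parenthetical inclusion should read $\square\xh\subseteq\lozenge\square\xh$, though this does not affect your argument since you prove both probabilities are zero directly).
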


\ifmaximal
	\begin{proof}
	Since $\xh'\in\Fu(\xh,u)$, there exists an $\varepsilon>0$ such that $\Ker(\xh'\mid s,u) \geq \varepsilon$ for all $s\in\xh$. Define the policy to be the constant one $\Cont(s) = u$ for all $s\in\xh$ and any other input policy for other states $s\notin\xh$. We first show that $P_s^{\Cont}(\mathfrak S \models \square \xh) = 0$.
	Note that $P_s^{\Cont}(\mathfrak S \models \square \xh) = \lim_{n\rightarrow\infty} V_n(s)$, with $V_0(s) = \mathbf 1_{\xh}(s)$ and $V_{n+1}(s) = \int_{\xh}V_n(s')\Ker(ds'|s,u)$. It is easy to show inductively that $V_n(s)\le (1-\varepsilon)^n$ for all $s$. By taking the limit as $n\rightarrow\infty$ we get the claimed result. For the $\lozenge\square \xh$ property, we have $P_s^{\Cont}(\mathfrak S \models \lozenge\square \xh) = \lim_{n\rightarrow\infty} W_n(s)$ with $$W_{n+1}(s) = P_s^{\Cont}(\mathfrak S \models \square \xh) +\int_{\Xs\backslash\xh}W_n(s')\Ker(ds'|s,u),$$ which results in $W_n(s) = 0$ for all $n$.   
	\end{proof}
\else
	The proof can be found in the longer version of this paper \cite{majumdar2019symbolic}.
\fi

Intuitively, \REFprop{prop:finite escape} says that if there is a $\xh'\in \Fu(\xh,u)$ with $\xh'\neq\xh$, then almost all trajectories reaching $\xh$ will eventually leave $\xh$ in finite time with the repeated use of the control input $u$.

We presented \REFex{ex:trapped transitions} to show that having $\Ker(\xh'\mid s,u)>0$ for all $s\in \xh$ is inadequate, and we need a uniform lower bound $\varepsilon$ for the definition of $\Fu$ in \eqref{eq:def fu}.
In fact, this condition is enough under proper continuity assumptions on the stochastic kernel.
\begin{proposition}
\label{prop:cont}
Suppose the kernel $\Ker$ is continuous, i.e., $g(s,u) = \int_{\Xs} f(s')\Ker(ds'\mid s,u)$ is upper (lower) semi-continuous for any upper (lower) semicontinuous function $f$. Moreover, the partition sets in $\Xh = \set{\xh_i}_{i\in I}$ are such that $\Ker(\mathsf{int}(\xh')\mid s,u) = \Ker(\mathsf{cl}(\xh')\mid s,u)$ for all $s\in\mathsf{cl}(\xh)$, $u\in\mathcal U$, and $\xh,\xh'\in\Xh$, where $\mathsf{int}(\cdot)$ and $\mathsf{cl}(\cdot)$ indicate respectively the interior and the closure of a set. Then, $\Fu(\xh,u)$ can be defined alternatively as
\begin{equation}
\label{eq:under_alter}
		\Fu(\xh,u)\subseteq \set{\xh'\in\Xh\mid\; \forall s\in \mathsf{cl}(\xh)\;.\;\Ker(\xh'\mid s,u)>0}.
	\end{equation}
\end{proposition}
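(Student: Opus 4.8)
The plan is to show that the set appearing on the right-hand side of \eqref{eq:under_alter} is contained in the set appearing on the right-hand side of \eqref{eq:def fu}; since an abstraction only requires $\Fu(\xh,u)$ to be a \emph{subset} of the latter, this immediately justifies the alternative definition. So fix $\xh\in\Xh$, $u\in\Us$, and $\xh'\in\Xh$ with $\Ker(\xh'\mid s,u)>0$ for all $s\in\mathsf{cl}(\xh)$; the goal is to produce a single $\varepsilon>0$ with $\Ker(\xh'\mid s,u)\ge\varepsilon$ for all $s\in\xh$.

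The first step is to replace $\xh'$ by its interior so that semicontinuity can be invoked. The indicator $\mathbf 1_{\mathsf{int}(\xh')}$ is lower semicontinuous (the set $\{\mathbf 1_{\mathsf{int}(\xh')}>c\}$ is either empty, the open set $\mathsf{int}(\xh')$, or the whole space), so by the continuity hypothesis on $\Ker$ the map $h(s):=\Ker(\mathsf{int}(\xh')\mid s,u)=\int_{\Xs}\mathbf 1_{\mathsf{int}(\xh')}(s')\,\Ker(ds'\mid s,u)$ is lower semicontinuous in $s$. Next, using the assumed boundary condition $\Ker(\mathsf{int}(\xh')\mid s,u)=\Ker(\mathsf{cl}(\xh')\mid s,u)$ for $s\in\mathsf{cl}(\xh)$ together with the inclusions $\mathsf{int}(\xh')\subseteq\xh'\subseteq\mathsf{cl}(\xh')$, one obtains $h(s)=\Ker(\xh'\mid s,u)$ for every $s\in\mathsf{cl}(\xh)$. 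In particular the map $s\mapsto\Ker(\xh'\mid s,u)$ agrees on $\mathsf{cl}(\xh)$ with the lower semicontinuous function $h$, even though it need not be lower semicontinuous itself; this detour through $\mathsf{int}(\xh')$ is the only genuinely delicate point of the argument, since $\mathbf 1_{\xh'}$ is not lower semicontinuous in general.

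Finally, $\mathsf{cl}(\xh)$ is a closed bounded subset of the (bounded) state space, hence compact, so the lower semicontinuous function $h$ attains its infimum on it; set $\varepsilon:=\min_{s\in\mathsf{cl}(\xh)}h(s)=\min_{s\in\mathsf{cl}(\xh)}\Ker(\xh'\mid s,u)$. By the standing hypothesis this minimum is strictly positive, and it bounds $\Ker(\xh'\mid s,u)$ from below on all of $\mathsf{cl}(\xh)\supseteq\xh$. Thus $\xh'$ satisfies the defining condition in \eqref{eq:def fu}, which completes the argument. I expect the main obstacle to be only bookkeeping: carefully chaining the boundary-measure condition with the interior/closure sandwich, ensuring semicontinuity is applied to $\mathbf 1_{\mathsf{int}(\xh')}$ rather than $\mathbf 1_{\xh'}$, and noting that the claim is soundness of the alternative $\Fu$, not equality of the two candidate sets.
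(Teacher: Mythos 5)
Your argument is correct as far as it goes, and it reaches the conclusion by a slightly leaner route than the paper. The paper first shows that $s\mapsto\Ker(\xh'\mid s,u)$ is \emph{continuous} on $\mathsf{cl}(\xh)$ — it equals both $\Ker(\mathsf{cl}(\xh')\mid s,u)$ (upper semicontinuous, via the indicator of a closed set) and $\Ker(\mathsf{int}(\xh')\mid s,u)$ (lower semicontinuous, via the indicator of an open set) — and then takes the minimum of a continuous function on a compact set. You use only the lower-semicontinuous half together with the fact that a l.s.c.\ function attains its infimum on a compact set, which is enough to produce the uniform $\varepsilon>0$; the detour through $\mathsf{int}(\xh')$ and the boundary condition is exactly the delicate point, and you handle it correctly. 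The one substantive difference is scope: you prove only the inclusion ``right-hand side of \eqref{eq:under_alter} $\subseteq$ right-hand side of \eqref{eq:def fu}'', i.e.\ soundness of the alternative definition, whereas the paper proves the two candidate sets are \emph{equal}. The converse inclusion (a uniform bound $\varepsilon$ on $\xh$ forces strict positivity on all of $\mathsf{cl}(\xh)$) is what guarantees that switching to the positivity criterion loses no admissible successors; it follows in one line from the upper-semicontinuity you chose not to invoke (a function that is $\ge\varepsilon$ on the dense subset $\xh$ and u.s.c.\ on $\mathsf{cl}(\xh)$ stays $\ge\varepsilon$ there). You correctly identified that only the soundness direction is needed for the abstraction to be valid, so this is a deliberate narrowing rather than an error, but the proposition as stated is most naturally read as the equivalence, and completing it costs essentially nothing.
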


\ifmaximal
	The proof can be found in the appendix.
\else
	The proof can be found in the longer version of this paper \cite{majumdar2019symbolic}.
\fi

\section{Controller synthesis and refinement}


\label{sec:synthesis}
With the abstraction $\Abs$ of the given CMP $\Sys$ computed in \REFsec{sec:abstraction}, we now propose algorithms to approximate---from above and below---the maximal a.s.\ winning region $\W(\Sys,\Cont^\ast)$.
As a by-product, we will also obtain a suitable control policy.
We first lift the specification $\square \lozenge B$ to an abstract specification that can be specified using the states of $\Abs$.
For that we define an under-approximation $\Bh$ and an over-approximation $\Bho$ of the set $B\subseteq \mathcal{S}$ using the states of $\Abs$:
\begin{align*}
	\Bh := \set{\xh\in \Xh \mid \ \forall s\in \xh\;.\;B(s)=1}\\
	\Bho := \set{\xh\in \Xh \mid \ \exists s\in \xh\;.\;B(s)=1}.
\end{align*}

Note that the sink state $\phi\notin \Bh$, since we assume that the set $B$ is fully contained in the working region of the CMP. Hence, the satisfaction of $\square\lozenge \Bh$ would ensure that $\phi$ is always avoided.

To formalize the synthesis process, we introduce four operators:

\begin{description}
	\item[Controllable predecessor:] Define $\cpre_F:2^{\Xh}\rightarrow 2^{\Xh}$ for $F\in \set{\Fo,\Fu}$, 
	\begin{equation}\label{eq:cpre}
		\cpre_F:T\mapsto \set{\xh\in \Xh \mid \exists u\in \Us\;.\; F(\xh,u)\subseteq T}.
	\end{equation}
	\item[Cooperative predecessor:] Define $\pre_F:2^{\Xh}\rightarrow 2^{\Xh}$ for $F\in \set{\Fo,\Fu}$,
	\begin{equation}
		\pre_F:T\mapsto \set{\xh\in \Xh \mid \exists u\in \Us\;.\; F(\xh,u)\cap T \neq \emptyset}.
	\end{equation}
	\item[Almost sure predecessor:] Define $\apre:2^{\Xh}\times 2^{\Xh}\rightarrow 2^{\Xh}$,
	\begin{equation}
		\apre:(T,S)\mapsto \set{\xh\in \Xh \mid \exists u\in \Us\;.\; \Fo(\xh,u)\subseteq T \wedge \Fu(\xh,u)\cap S \neq \emptyset}.
	\end{equation}
	\item[Uncertain predecessor:] Define $\upre:2^{\Xh}\times 2^{\Xh}\rightarrow 2^{\Xh}$,
	\begin{equation}\label{eq:upre}
		\upre:(T,S)\mapsto \set{\xh\in \Xh\mid \exists u\in \Us\;.\; \Fu(\xh,u)\subseteq T \wedge \Fo(\xh,u)\cap S \neq \emptyset}.
	\end{equation}
\end{description}


\subsection{Warm-up: reachability specification}
As a warm-up, we first consider under-approximation of the largest winning domain for a.s.\ satisfaction of the reachability specification $\lozenge B$ in the presence of stochastic noise.
For that we take inspiration from the fixed point of a.s.\ reachability in concurrent two-player games \cite{dAH00}, and obtain the following nested fixed-point on the abstract system $\Abs$:
\begin{align}\label{eq:a.s. reach fixed-point}
	\nu Y\;.\;\mu Z\;.\; (\Bh^c\cap (\apre(Y,Z)\cup \cpre_{\Fo}(Z))) \cup \Bh,
\end{align} 
where $\Bh^c$ represents the complement of the set $\Bh$.
Intuitively, the above fixed point computes the largest set $Y$ s.t.\ from every state $y\in Y\setminus \Bh$ there exists control input sequence s.t.\
either
 \begin{inparaenum}[(a)]
	\item there is a finite sequence of $\Fu$-transitions to $\Bh$ and no $\Fo$-transition outside $Y$, or 
	\item there is a bounded finite sequence of $\Fo$-transitions all of whose non-deterministic branches reach $\Bh$.
\end{inparaenum}
On the CMP level this means: for all  $y\in Y\setminus \Bh$ and for all $s\in y$ there exist control input sequence s.t.\ either 
\begin{inparaenum}[(a)]
	\item there exist paths that enter $\Bh$ with positive probability bounded away from zero and all paths stay inside $Y$ with probability $1$, or 
	\item there exist paths that enter $\Bh$ with probability $1$.
\end{inparaenum}
It can be shown, by repeated use of \REFprop{prop:finite escape} on the $\Fu$-transitions, that $\Bh$ (and hence $B$) will be reached a.s.\ by such a control sequence.

The fixed point \eqref{eq:a.s. reach fixed-point} is in contrast with the usual reachability fixed point for worst case disturbances, which is given as $\mu Z\;.\;\cpre_{\Fo}(Z)\cup \Bh$, where it is required that from every state $z\in Z$, all the non-deterministic branches of $\Fo$ reach $\Bh$ in at most some finite number of steps. 
Note that the solution of \eqref{eq:a.s. reach fixed-point} subsumes the solution of the usual fixed point, and in practice the usual reachability fixed point is much stronger than its stochastic counterpart.
Following is an illustrative example that captures this intuition.

\begin{example}\label{ex:soft treatment of liveness}
	Consider the CMP $\mathfrak S_2$ (see \REFfig{fig:cpre is too strong}) defined in \REFex{ex:trapped transitions}, with stochastic kernel \eqref{eq:ex_trapped} and $a(s)=0.5$.
	The transition probabilities from states in $[0,1)$ to the interval $[1,2]$ have a lower bound $0.5$, and so unlike $\mathfrak S_1$ of \REFex{ex:trapped transitions}, $\Fu([0,1)) = \set{[0,1),[1,2]}$.
	Moreover, since there exist transitions with positive probability from all the state in $[0,1)$ to $[1,2]$, hence $\Fo([0,1)) = \set{[0,1),[1,2]}$ as well.

	Assume that the abstract reachability specification is given $\lozenge [1,2]$.
	If we start from some state in $[0,1)$ and treat the adversary (resolving the non-determinism) as worst-case, then by using both $\Fu$ or $\Fo$, we will forever loop in $[0,1)$ and would violate the specification.
	Formally, the fixed point for $\lozenge [1,2]$ will converge to the singleton set $\set{[1,2]}$, since $\cpre_{\Fo}(\set{[1,2]}) = \set{[1,2]}$.
	
	On the other hand, if we treat the disturbance as stochastic noise, then from \REFprop{prop:finite escape} we know that if we loop in $[0,1)$ indefinitely long, then in the long run a.s.\ the system is going to move to $[1,2]$ (recall the interpretation of fair random adversary).
	So the winning region in this case should be the whole state space $\set{[0,1),[1,2]}$.
	Indeed, \eqref{eq:a.s. reach fixed-point} will include the state $[0,1)$ since $[0,1)\in \apre(\set{[0,1),[1,2]},\set{[1,2]})$.
%
	\begin{figure}
	{\scriptsize
		\begin{tikzpicture}[scale=4,->,>=stealth',shorten >=1pt,auto,semithick]
			\draw[-]	(0,0)	--	(2,0);
			\draw[-]	(0.05,-0.1)	--	(0,-0.1)	--	(0,0.1)	--	(0.05,0.1);
			\draw[-]	(2-0.05,-0.1)	--	(2,-0.1)	--	(2,0.1)	--	(2-0.05,0.1);
			\draw[-]	(1,-0.1)	--	(1,0.1);
			
			\node[state,label=left:{\color{red} $0$},fill=black,scale=0.15]	(A)	at	(0,0)	{};
			\node[state,label=above right:{\color{red} $\frac{1}{2}$},fill=black,scale=0.15]	(B)	at	(0.5,0)	{};
			\node[state,label=below:{\color{red} $\frac{1}{3}$},fill=black,scale=0.15]	(C)	at	(0.333,0)	{};
			\node[state,label=below:{\color{red} $\frac{1}{4}$},fill=black,scale=0.15]	(D)	at	(0.25,0)	{};
			\node[state,label=below:{\color{red} $\frac{1}{5}$},fill=black,scale=0.15]	(E)	at	(0.2,0)		{};
			\node[state,label=below:{\color{red} $\frac{1}{6}$},fill=black,scale=0.15]	(F)	at	(0.166,0)	{};
			\node[state,,label=below:{\color{red} $\frac{1}{7}$},fill=black,scale=0.15]	(G)	at	(0.142,0)	{};
			
			\node[state,label=below left:{\color{red} $1$},fill=black,scale=0.15]	(H)	at	(1,0)	{};
			\node			(G)	at	(1.5,0)	{};
			\node[label=below left:{\color{red} $2$}]	at	(2,0)	{};
			
			\path	(A)	edge[loop above]		node	{$0.5$}	()
					(A)	edge[bend left]		node	{$0.5$}	(G)
					(B)	edge[bend left]		node	{$0.5$}	(C)
						edge[bend right]		node[below]	{$0.5$}	(G)
					(C)	edge[bend right]	node[above]	{$0.5$}	(D)
						edge[bend left]	node[below]	{$0.5$}	(G)
					(H)	edge[bend left]	node[below]	{$1$}	(G)
					(G)	edge[loop above]	node[above]	{$1$}	(G);
					
			 \node[draw, single arrow,
              minimum height=10mm, minimum width=8mm,
              single arrow head extend=2mm,
              anchor=west, rotate=-90] at (1,-0.3) {};
              \node		at		(1.25,-0.4)		{\normalsize Abstraction};
              
             \node[state]	(I)	at	(0.5,-0.7)	{\color{red} $[0,1)$};
             \node[state]	(J)	at	(1.5,-0.7)	{\color{red} $[1,2]$};
             \path	(I)	edge[loop left]	node	{}	()
             			edge			node	{}	(J)
             		(J)	edge[loop right]	node	{}	();
             			
             \node	at	(1,-0.8)	{\normalsize $\Fo(\cdot) \equiv \Fu(\cdot)$};
		\end{tikzpicture}
	}
	\caption{The top figure represents $\Sys_2$ and the bottom figure represents both transition relations $\Fu([0,1))$ and $\Fo([0,1))$ (same in this case) of $\Abs$.
	It is not possible to reach $[1,2]$ from $[0,1)$ if the non-determinism in $\Fu$ or $\Fo$ is treated as a worst-case adversary: the adversary can choose the loop at $[0,1)$ all the time.
	In the top figure, the labels on the edges are the probabilities of the associated transitions, and in both figures the labels in red are continuous states.}
	\label{fig:cpre is too strong}
	\end{figure}	
	\qed	
\end{example}

\begin{remark}\label{rem:progress group}
	Nilsson et al.\ \cite{nilsson2017augmented} introduced augmented transition systems as abstractions of non-stochastic systems.
	Augmented transition systems embed liveness information in progress groups.
	If a set of abstract states form a progress group under some control action, then the system eventually leaves the progress group under 
	repeated use of this particular control action.
	Even though our work deals with an unrelated problem, we note that our fairness assumption on the random adversary allowing the 
	CMP to make progress to escape a given abstract state a.s.\ (\REFprop{prop:finite escape}) has a similar flavor.
\end{remark}

\subsection{Under-approximation of the maximal a.s.\ winning region}
\label{sec:under_app}
We build up on the intuition of the solution of the a.s.\ reachability specification, and present the computation of a sound under-approximation of the maximal a.s.\ winning region $\W(\Sys,\Cont^\ast)$ with a suitable abstract controller $\Conth$.
In $\mu$-calculus notation, this under-approximation can be computed as:
\begin{multline}\label{eq:wu}
	\wu := \nu Y\;.\;\mu Z\;.\;\left[ (\Bh^c \cap (\apre(Y,Z) \cup \cpre_{\Fo}(Z)))\right. \\
							\left. \cup (\Bh\cap \cpre_{\Fo}(Y)) \right].
\end{multline}
Note that the only new term in \eqref{eq:wu} as compared to \eqref{eq:a.s. reach fixed-point} is the intersection of $\Bh$ with $\cpre_{\Fo}(Y)$.
This additional term makes sure that each time $\Bh$ is reached, the winning region $Y$ is not left in the next step to make sure that $\Bh$ can be reached once again.

The fixed point \eqref{eq:wu} and the associated abstract controller $\Conth$ can be computed as the nested iteration given in \REFalg{alg:compute wu}.
The controller $\Conth$ is a partial function from $\Xh$ to $\Us$, and we use the notation $\dom \Conth$ to denote the domain of the controller $\Conth$.

\begin{algorithm}
	\caption{Computation of $\wu$}
	\label{alg:compute wu}
	\begin{algorithmic}[1]
		\Require $\Bh \subseteq \Xh$
		\Ensure $\wu,\Conth$
		\State $Y \gets \Xh, Y' \gets\emptyset$ \label{line:initialize}
		\While{$Y \neq Y'$}\label{line:outer while begin}
			\State $Y' \gets Y$
			\State $Z \gets\emptyset, Z' \gets \Xh$
			\State $\forall\xh\in \Xh\;.\;\Conth:\xh \mapsto \emptyset$ 
			\While{$Z \neq Z'$}\label{line:inner while begin}
				\State $Z' \gets Z$
				\State $Z'' \gets (\Bh^c \cap (\apre(Y,Z)\cup \cpre_{\Fo}(Z))) \cup (\Bh\cap \cpre_{\Fo}(Y))$ \label{line:1 step wu}
				\State $\forall \xh\in Z''\setminus (\Bh \cup \dom \Conth)\;.\;\Conth:\xh\mapsto u$ s.t.\ $\Fo(\xh,u)\subseteq Y \wedge \Fu(\xh,u)\cap Z \neq \emptyset$ or $\Fo(\xh,u)\subseteq Z$ \label{line:control input for progress}
				\State $Z \gets Z''$
			\EndWhile\label{line:inner while end}
			\State $Y \gets Z$
		\EndWhile\label{line:outer while end}
		\State $\wu \gets Y$
		\State $\forall \xh\in \wu\cap \Bh\;.\;\Conth:\xh\mapsto u$ s.t.\ $\Fo(\xh,u)\subseteq \wu$\label{line:control input for safety}
		\State \Return $\wu,\Conth$
	\end{algorithmic}
\end{algorithm}

Note that, the existence of the control input $u$ in Lines~\ref{line:control input for progress} and~ \ref{line:control input for safety} 
is guaranteed because of the definition of $\cpre$ and $\apre$.


\begin{proposition}\label{prop:wu in w}
	The set $\wu$ is an under-approximation of the maximal a.s.\ winning region $\W(\Sys,\Cont^\ast)$.
\end{proposition}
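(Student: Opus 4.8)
The plan is to refine the abstract controller $\Conth$ returned by \REFalg{alg:compute wu} into a stationary controller $\Cont\in\Pi_S$ on $\Sys$ and to show that the continuous region $W_\wu := \bigcup_{\xh\in\wu}\xh$ is contained in $\W(\Sys,\Cont)$; since $\Cont^\ast$ is maximal among stationary policies (\REFprob{prob:policy}), this yields $W_\wu\subseteq\W(\Sys,\Cont^\ast)$ up to a null set, which is the assertion. Concretely, I would set $\Cont(s):=\Conth(\xh)$ whenever $s$ lies in a cell $\xh\in\wu$ and pick $\Cont(s)$ arbitrarily otherwise; $\Cont$ is measurable because $\Xh$ is a finite partition. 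Throughout I use that \REFalg{alg:compute wu} is the Knaster--Tarski iteration for \eqref{eq:wu}, so its returned set is exactly $\wu$ and $\Conth$ is the controller built in its last outer iteration together with Line~\ref{line:control input for safety}.

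The first ingredient is invariance of $W_\wu$. By Lines~\ref{line:control input for progress} and~\ref{line:control input for safety}, for every $\xh\in\wu$ the chosen input $u=\Conth(\xh)$ satisfies $\Fo(\xh,u)\subseteq\wu$; by \eqref{eq:def fo} this forces $\Ker(\xh'\mid s,u)=0$ for every $\xh'\notin\wu$ and every $s\in\xh$, hence $\Ker(\Xs\setminus W_\wu\mid s,\Cont(s))=0$. So under $\Cont$ a trajectory started in $W_\wu$ stays in $W_\wu$ almost surely, and the same computation restricted to cpre/apre witnesses shows all successors of a $\wu$-cell stay in $\wu$.

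The second, and main, ingredient is almost-sure reachability of $B$ from inside $\wu$. I would expose $\wu$ as the increasing chain $\emptyset=Z_0\subseteq Z_1\subseteq\dots\subseteq Z_m=\wu$ computed by the inner loop in the final outer iteration (where $Y$ has stabilized to $\wu$), so $Z_{k+1}=(\Bh^c\cap(\apre(\wu,Z_k)\cup\cpre_{\Fo}(Z_k)))\cup(\Bh\cap\cpre_{\Fo}(\wu))$, and assign to each $\xh\in\wu$ the rank $r(\xh):=\min\{k\ge 1\mid\xh\in Z_k\}$. Then $\Bh\cap\wu$ is precisely the set of rank-$1$ cells, and for $\xh\in\wu\setminus\Bh$ of rank $k+1\ge 2$ the input $\Conth(\xh)$ is a witness of one of two kinds: either $\Fo(\xh,\Conth(\xh))\subseteq Z_k$, so \emph{every} successor cell has rank $\le k$; or $\Fo(\xh,\Conth(\xh))\subseteq\wu$ and some $\xh'\in\Fu(\xh,\Conth(\xh))\cap Z_k$ with $\xh'\ne\xh$ (it is $\ne\xh$ since $\xh\notin Z_k$), so by \eqref{eq:def fu} there is $\varepsilon(\xh)>0$ with $\Ker(\xh'\mid s,\Conth(\xh))\ge\varepsilon(\xh)$ for all $s\in\xh$, i.e.\ a successor of rank $\le k$ appears with probability at least $\varepsilon(\xh)$, while all successors remain in $\wu$. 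Put $\varepsilon_0:=\min(\{1\}\cup\{\varepsilon(\xh)\mid\xh\in\wu\setminus\Bh\})>0$ (a finite minimum since $\Xh$ is finite). Consequently, from any state in a $\wu$-cell of rank $\ge 2$ the conditional probability of moving to a strictly smaller rank in one step is at least $\varepsilon_0$, and the trajectory never leaves $\wu$. Conditioning on rank-decrease at every step until a rank-$1$ cell is hit gives, for $B_\wu:=\bigcup_{\xh\in\Bh\cap\wu}\xh$, the uniform bound $P_s^\Cont(\text{reach }B_\wu\text{ within }m\text{ steps})\ge\varepsilon_0^{m}=:\delta>0$ for all $s\in W_\wu$; chopping the horizon into blocks of length $m$ and iterating with the Markov property yields $P_s^\Cont(\lozenge B_\wu)\ge 1-(1-\delta)^k$ for every $k$, hence $P_s^\Cont(\lozenge B_\wu)=1$.

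Finally I would assemble the B\"uchi property. Starting in $W_\wu$, the trajectory reaches $B_\wu$ a.s.; each time it is in $B_\wu$, invariance keeps the next state in $W_\wu$, from which $B_\wu$ is reached again a.s.; by the strong Markov property and the fact that a countable intersection of almost-sure events has probability one, the trajectory visits $B_\wu$ infinitely often a.s. Since $\Bh$ under-approximates $B$, every $\xh\in\Bh$ satisfies $\xh\subseteq B$, so $B_\wu\subseteq B$ and the run satisfies $\square\lozenge B$ a.s. Thus $f(s,\Cont)=1$ for all $s\in W_\wu$, i.e.\ $W_\wu\subseteq\W(\Sys,\Cont)\subseteq\W(\Sys,\Cont^\ast)$ almost everywhere. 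The step I expect to be delicate is the almost-sure reachability argument: the apre-branch only guarantees progress with probability $\ge\varepsilon_0<1$ and the complementary branches may jump to cells of \emph{larger} rank, so rank is not a monotone potential; the point is that progress still occurs with a uniform positive probability at every step while the run provably stays inside $\wu$, which is exactly what the geometric/Borel--Cantelli argument needs, and the uniform $\varepsilon$ in \eqref{eq:def fu} (cf.\ \REFprop{prop:finite escape}) is what keeps $\varepsilon_0$ bounded away from $0$.
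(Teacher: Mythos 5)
Your proof is correct and follows essentially the same route as the paper's: use the stages $Z_0\subset Z_1\subset\dots$ of the inner fixed point in the final outer iteration as a ranking, show invariance of $\wu$ via the $\Fo(\xh,u)\subseteq Y$ witnesses, establish almost-sure reachability of $\Bh$ using the uniform $\varepsilon$ from \eqref{eq:def fu}, and close the loop with the $\Bh\cap\cpre_{\Fo}(Y)$ term to get recurrence. Your uniform-$\varepsilon_0$/geometric block argument is a more explicit justification of the step the paper compresses into ``transition to $Z_{i-1}$ happens almost surely in the long run (follows from \REFprop{prop:finite escape})'', correctly handling the fact that the non-progress branch of $\apre$ may increase rank.
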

\ifmaximal
	\begin{proof}
		The goal is to prove $ \wu\subseteq \W(\Sys,\Cont^\ast)$.
		Let $q\in \Xh$ be a state s.t.\ $q\in \wu$.
		We show that $q\subseteq \W(\Sys,\Cont^\ast)$.
		In the last iteration of the outer while loop in Alg.~\ref{alg:compute wu}, we obtain a growing sequence of states $Z_0 \subset Z_1 \subset \ldots \subset Z_k = Y$, where $Z_0 = \emptyset$ and $Y = \wu$.
		Since $\apre(Y,\emptyset) = \cpre_{\Fo}(\emptyset) = \emptyset$, hence $Z_1 \subseteq \Bh$.
		For all the other states $\xh$ in $Z_i$ for $i\in (2;k]$, one of the two cases happen: Either
		\begin{inparaenum}[(a)]
			\item by $\cpre_{\Fo}(Z_{i-1})$, it is ensured that $Z_{i-1}$ is surely reached from $\xh$ in one step, or
			\item by $\apre(Y,Z_{i-1})$, it is ensured that $Y$ (same as $\wu$ in the last iteration) is not left from $\xh$, and additionally (follows from \REFprop{prop:finite escape}) from all the (continuous) states inside $\xh$, transition to $Z_{i-1}$ happens almost surely in the long run.
		\end{inparaenum}
		Thus, for every $q\in \wu\setminus \Bh$ and for all $x\in q$, $\Bh$ is reached almost surely in the long run.
		
		Moreover, the operator $\cpre_{\Fo}(Y)$ ensures that $Y$---same as $\wu$ in the last iteration---is not left in the one step from $Z_1$.
		Hence almost surely $\Bh$ is visited infinitely often. 
	\end{proof}
\else
	The proof can be found in the longer version of this paper \cite{majumdar2019symbolic}.
\fi

\begin{remark}
	The operators $\apre$ and the fix-point \eqref{eq:wu} are inspired by how a.s.\ winning strategies are synthesized for B\"uchi specification in two-player concurrent games \cite{dAH00}:
	There the optimal strategy for the protagonist player (who wants to satisfy the B\"uchi condition) is to play an action that surely keeps the game within the winning region, while making progress towards the target with positive probability.
	
	At a very high level, we use the same insight to express $\wu$ in \eqref{eq:wu}, though for us the underlying game structure is totally different (see \REFrem{rem:two-adversaries}).
	It turns out that winning the game almost surely in our case means to either stay in the winning region using all $\Fo$-successors, while at the same time making progress using some $\Fu$-successor.
%
\end{remark}

\subsection{Controller refinement}
The state space discretization in the abstraction process induces a quantizer map $Q:\Xs \rightarrow \Xh'$ s.t.\ $Q:s\mapsto \xh$ when $s\in \xh$.
Given the abstract controller $\Conth:\Xh\rightarrow \Us$, we can obtain a continuous controller $\Cont:\Xs\rightarrow \Us$ as $\Cont \equiv \Conth\circ Q$, 
where ``$\circ$'' denotes function composition.
The following theorem states that $\Cont$ is a sound controller.

\begin{theorem}
Consider the control policy $\Cont \equiv \Conth\circ Q$ and the set $\wu$, where $\wu$ and $\Conth$ are the outputs of \REFalg{alg:compute wu} and $Q$ is the quantizer map. Then, $P_s^\Cont(\Sys \models \square\lozenge B) = 1$ for all $s\in \wu$.
\end{theorem}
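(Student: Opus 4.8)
The plan is to recognize this theorem as the constructive form of \REFprop{prop:wu in w}: we already know $\wu\subseteq\W(\Sys,\Cont^\ast)$, and now we must verify that the specific stationary policy $\Cont\equiv\Conth\circ Q$ returned by \REFalg{alg:compute wu} is the witness, i.e.\ that it enforces $\square\lozenge B$ almost surely from every $s\in\wu$. I would split the argument into (i) showing that $\wu$, viewed as the subset $\bigcup_{\xh\in\wu}\xh$ of $\Xs$, is absorbing under $\Cont$, and (ii) showing that from every $s\in\wu$ the set $\Bh$ is hit in finite time almost surely under $\Cont$. Since $\Bh\subseteq B$, chaining (i) and (ii) at the successive hitting times of $\Bh$ via the strong Markov property gives that $B$ is visited infinitely often with probability one, which is the claim.

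For step (i) I would read off \REFalg{alg:compute wu} that every $\xh\in\wu$ is assigned a control $u=\Conth(\xh)$ with $\Fo(\xh,u)\subseteq\wu$: if $\xh\in\Bh$ this is exactly Line~\ref{line:control input for safety}, and if $\xh\notin\Bh$ then in the last outer iteration the control is picked either through $\cpre_{\Fo}(Z)$ (hence $\Fo(\xh,u)\subseteq Z\subseteq\wu$) or through $\apre(\wu,Z)$ (hence $\Fo(\xh,u)\subseteq\wu$ directly). The defining inclusion \eqref{eq:def fo} then says that for every continuous $s\in\xh$ all cells reachable with positive probability under $u$ lie in $\wu$, so $\Ker(\wu\mid s,u)=1$, which is absorption; this also covers the sink state $\phi$ since $\phi\notin\Bh$ and $\phi\notin\wu$.

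For step (ii) I would fix the last iteration of the outer loop of \REFalg{alg:compute wu}, take the chain $\emptyset=Z_0\subset Z_1\subset\dots\subset Z_k=\wu$ produced by the inner loop, and define the rank $r(\xh):=\min\{i\mid\xh\in Z_i\}$. As already noted in the proof of \REFprop{prop:wu in w}, $Z_1\subseteq\Bh$, so rank-$1$ cells lie in $B$; and a cell of rank $i\ge2$ is not in $\Bh$, so $\Conth$ is assigned there at Line~\ref{line:control input for progress} and either (a) has $\Fo(\xh,u)\subseteq Z_{i-1}$ (the rank surely drops below $i$) or (b) has $\Fo(\xh,u)\subseteq\wu$ together with some $\xh'\in\Fu(\xh,u)\cap Z_{i-1}$, so by \eqref{eq:def fu} there is a uniform $\varepsilon_{\xh,u}>0$ with $\Ker(\xh'\mid s,u)\ge\varepsilon_{\xh,u}$ for all $s\in\xh$, i.e.\ with probability at least $\varepsilon_{\xh,u}$ the next cell has rank $\le i-1$. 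Taking $\varepsilon_0>0$ to be the minimum of the finitely many constants $\varepsilon_{\xh,u}$ used by $\Conth$, I would argue that, conditioned on the past, the rank strictly decreases at each step with probability at least $\varepsilon_0$; since ranks lie in $\{1,\dots,k\}$, the Markov property gives that $\Bh$ is reached within $k-1$ steps with probability at least $p:=\varepsilon_0^{k-1}>0$ from any state of $\wu$, and by step (i) the trajectory stays in $\wu$, so iterating yields $P_s^{\Cont}(\Sys\models\lozenge\Bh)\ge 1-(1-p)^n$ for all $n\in\mathbb N$, hence $=1$.

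The main obstacle I anticipate is the uniformity in step (ii): one has to make sure the $\varepsilon$-lower bounds coming from the $\Fu$-edges actually used by $\Conth$ can be taken bounded away from $0$ simultaneously — this is fine because $\Xh$ and $\Us$ are finite — and then turn the statement "the rank decreases per step with probability $\ge\varepsilon_0$" into an honest geometric tail bound by conditioning on the path prefix rather than treating steps as independent. This is the same mechanism as in \REFprop{prop:finite escape} and in the proof of \REFprop{prop:wu in w}, now carried out along the refined continuous trajectory under $\Cont$; once it is in place, the infinitely-often conclusion is immediate from absorption of $\wu$ plus the strong Markov property applied at each successive return to $\Bh$.
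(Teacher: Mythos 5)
Your proposal is correct and follows essentially the same route as the paper, which omits this proof on the grounds that it is contained in the proof of \REFprop{prop:wu in w}: the same chain $Z_0\subset\cdots\subset Z_k=\wu$ from the last outer iteration, the same case split between $\cpre_{\Fo}$ (sure one-step progress) and $\apre$ (stay in $\wu$ plus $\varepsilon$-bounded progress via \REFprop{prop:finite escape}), and the same concluding step that $\cpre_{\Fo}(Y)$ keeps the trajectory in $\wu$ after each visit to $\Bh$. Your version merely makes explicit the uniform $\varepsilon_0$, the geometric tail bound, and the strong Markov chaining that the paper leaves implicit.
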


The proof of the above theorem directly follows from the proof of \REFprop{prop:wu in w}, and hence is omitted.

\subsection{Over-approximation of the maximal a.s.\ winning region}
\label{sec:over_app}
The over-approximation of $\W(\Sys,\Cont^\ast)$ is given by the following fixed-point
\begin{multline}\label{eq:wo}
	\wo := \nu Y\;.\;\mu Z\;.\;\left[ (\Bho^c \cap \upre(Y,Z))\right. \\
									\left. \cup \left(\Bho \cap (\cpre_{\Fu}(Y) \cup \pre_{\Fo\setminus \Fu}(Y))\right) \right].
\end{multline}
The expression \eqref{eq:wo} can be solved in the same way as \REFalg{alg:compute wu} by replacing the update in Line~\ref{line:1 step wu} with the update in the r.h.s.\ of \eqref{eq:wo}.
Also, Line~\ref{line:control input for progress} and \ref{line:control input for safety} are not needed, as the control policy in this case does not serve any useful purpose.


\begin{proposition}\label{prop:w in wo}
	The set $\wo$ is a superset of the maximal a.s.\ winning region $\W(\Sys,\Cont^\ast)$.
\end{proposition}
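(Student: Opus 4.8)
The plan is to show that every continuous state $s$ in some abstract state $\xh \notin \wo$ cannot be almost-sure winning under any stationary policy, i.e.\ $f^\ast(s) < 1$, which by \REFass{ass:markov policy} (or directly, by the definition of $\W(\Sys,\Cont^\ast)$) establishes $\W(\Sys,\Cont^\ast) \subseteq \wo$ almost everywhere. The key is a soundness argument for the complement: the complement of $\wo$ is characterized by the dual (De Morgan) fixed point $\mu Y\,.\,\nu Z\,.\,[\ldots]$, and I would argue that the adversaries — here meaning the \emph{non-deterministic} adversary exploiting the gap between $\Fo$ and $\Fu$, together with the fair \emph{random} adversary whose only constraint is the $\varepsilon$-fairness — have a (fair) strategy that, from any state of $\xh^c = \wo^c$, with positive probability either never reaches $\Bho$ again or leaves $Y$. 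Since $\Bh \subseteq \Bho$ and $B(s)=1$ implies $s$ lies in some $\xh \in \Bho$, failing $\square\lozenge \Bho$ on the abstraction implies failing $\square\lozenge B$ on the CMP.

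Concretely, I would first unfold \eqref{eq:wo}: $\wo$ is the greatest $Y$ such that, writing $Z^\ast(Y)$ for the least fixed point of the inner $\mu Z$, we have $Y = Z^\ast(Y)$. A state leaves $\wo$ precisely when, for \emph{every} control input $u$, either $\upre$ fails (so $\Fu(\xh,u) \not\subseteq Y$, meaning some $\Fu$-successor — a forced move of the fair random adversary — escapes $Y$, \emph{or} $\Fo(\xh,u) \cap Z = \emptyset$ so no $\Fo$-branch makes progress toward the target layer), or, on the $\Bho$ part, both $\cpre_{\Fu}(Y)$ and $\pre_{\Fo\setminus\Fu}(Y)$ fail. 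I would translate each failing case into an adversary move in the concrete CMP that prevents almost-sure satisfaction: when $\Fu(\xh,u)\not\subseteq Y$, the random adversary, being fair, will with probability bounded away from zero (using the $\varepsilon$ of Definition~\ref{def:abstraction}, via the logic of \REFprop{prop:finite escape} run in reverse) eventually take the transition leaving $Y$; once outside $\wo$ this can be repeated, giving a positive escape probability at each visit and hence escape with probability one — contradicting that the trajectory stays in a region from which $\Bho$ is revisited a.s. The $\cpre_{\Fo}(Z)$-type failures are handled by the usual worst-case (non-deterministic adversary) argument as in the deterministic $\mu$-calculus reachability characterization.

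The cleanest way to organize the induction is by the iteration stages of the outer $\nu Y$: $Y_0 = \Xh \supseteq Y_1 \supseteq \cdots$ with $\wo = \bigcap_n Y_n$. I would show by induction on $n$ that if $\xh \notin Y_{n+1}$ then there is a (fair, history-dependent) adversary strategy guaranteeing that from every $s \in \xh$, the trajectory either does not satisfy $\square\lozenge \Bho$ at all, or reaches $\Xh \setminus Y_n$ with positive probability — and then stitch these together across the levels, using that $\wo^c$ is reached in finitely many such steps (the abstraction is finite) to get a uniform positive lower bound on the probability of eventually permanently avoiding $\Bho$ or diverging. Finally, because the abstract state space is finite, taking $n$ large enough makes $Y_n = \wo$, closing the argument. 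Dually, one could phrase the whole thing by computing the complement fixed point $\mu Y\,.\,\nu Z\,.\,[\,(\Bho \cap \upre(Y,Z)^c\text{-type})\cup\ldots\,]$ and reading off a "spoiling" strategy directly; I would pick whichever formulation keeps the De Morgan dualization of $\apre/\upre$ and $\cpre/\pre$ least error-prone.

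The main obstacle I anticipate is the probabilistic bookkeeping in the $\upre$-failure case: unlike in finite MDPs, a single visit to $\xh^c$ only gives escape \emph{with positive probability}, and one must ensure these probabilities do not shrink to zero along the run. This is exactly where the uniform $\varepsilon$ lower bound in \eqref{eq:def fu} is indispensable — it yields, via a Borel–Cantelli / geometric-trials argument (the same mechanism as in the proof of \REFprop{prop:finite escape}), a uniform per-visit lower bound, so infinitely many visits force the bad event with probability one. Making this rigorous requires carefully defining the adversary's (fair) strategy so that the fairness constraint is respected while still steering toward the escaping $\Fu$-transition, and confirming measurability of the resulting events (which is covered in spirit by \cite{TMKA17} and \REFprop{prop:measurability of windom}).
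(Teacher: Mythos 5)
Your proposal is correct and follows essentially the same route as the paper's proof: both argue over the shrinking outer iteration $Y^0 \supseteq Y^1 \supseteq \cdots$ of \eqref{eq:wo}, identify the first round at which the abstract state containing a putative winning state is excluded, split on membership in $\Bho$, and translate the failure of $\upre$, $\cpre_{\Fu}$ and $\pre_{\Fo\setminus\Fu}$ for \emph{every} input into a positive-probability violation of $\square\lozenge B$ in the concrete CMP, using the uniform $\varepsilon$ of \eqref{eq:def fu} together with the over-approximation property of $\Fo$. Your write-up is in fact more explicit than the paper's about the per-level induction and the geometric-trials bookkeeping that the paper compresses into the single sentence ``the specification will be violated with positive probability after $i$ time steps'' (your stray mention of ``$\cpre_{\Fo}(Z)$-type failures,'' an operator that appears in \eqref{eq:wu} rather than \eqref{eq:wo}, is only a terminological slip, since you identify the correct operators elsewhere).
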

\ifmaximal
	\begin{proof}
		Let $x^*\in \W(\Sys,\Cont^\ast)$.
		Since $\Xh$ creates a partition of the state space $\Xs$, hence there exists an abstract state $\xh^*\in \Xh$ s.t.\ $x^*\in \xh^*$.
		We need to show that $\xh^* \in \wo$.
		For the sake of contradiction, assume that $\xh^*\in \wo^c$.
		We will show that this cannot happen.
		
		The fixed point computation \eqref{eq:wo} produces a shrinking sequence of states $\Xh = Y^0 \supseteq Y^1 \supseteq \ldots \supseteq Y^k = \wo$.
		Let $i\in \mathbb{N}$ be the round index when $\xh^*$ was excluded from $Y$ for the first time i.e., $\xh^*\in Y^{i-1}$ but $\xh^*\notin Y^{i}$.
		Consider the following two possible cases:
		\begin{inparaenum}[(a)] 
			\item When $\xh^*\in \Bho$, then this means that for all $u\in \Us$, $\Fu(\xh^*,u) \not\subseteq Y^{i-1}$ (\emph{all states} in $\xh^*$ leave $Y^{i-1}$ with positive probability) and $\left(\Fo(\xh^*,u)\setminus \Fu(\xh^*,u)\right) \cap Y^{i-1} = \emptyset$ (\emph{no state} in $\xh^*$ can stay in $Y^{i-1}$ with positive probability).\label{case a}
			\item When $\xh^*\notin \Bho$, then this means that for all $u\in \Us$, either $\Fu(\xh^*,u) \not \subseteq Y^{i-1}$ (\emph{all states} in $\xh^*$ leave $Y^{i-1}$ with positive probability), or from \emph{all states} $x\in\xh^*$ there does not exist any path to $\Bho$. \label{case b}
		\end{inparaenum}
		Both \eqref{case a} and \eqref{case b} mean that from \emph{all} the continuous states $x\in\xh^*$, the specification will be violated with positive probability after $i$ time steps.
		This is a contradiction to our assumption that $x^*\in \W(\Sys,\Cont^\ast)$, since we know that from $x^*\in \xh^*$ the specification can be satisfied for infinite duration with probability $1$.
		Hence, it must hold that $\xh^*\in \wo$.	
	\end{proof} 
\else
	The proof can be found in the longer version of this paper \cite{majumdar2019symbolic}.
\fi

\subsection{Over-approximation of the minimal a.s.\ losing region for reachability}\label{sec:minimal losing region}
Once we have a tight approximation of the a.s.\ winning region, we can compute a lower-bound of the satisfaction probability for the quantitative version of the $\square\lozenge B$ through Eqn.~\eqref{eq:decom}:
\begin{equation*}
P_s^{\Cont}(\mathfrak S\models \square\lozenge B) \ge P_s^{\Cont}(\mathfrak S\models \lozenge \W(\Sys,\Cont^\ast))\ge P_s^\Cont(\Sys \models \lozenge \wu)
\end{equation*}
for all $s\in \wu^c$, 
where $\wu^c$ is the complement of $\wu$.
Efficient computation of maximal reachability requires computation of minimal a.s.\ losing region, i.e., the set
$$L:=\set{s\in\mathcal S | \sup_{\Cont\in\Pi_S}P_s^\Cont(\Sys \models \lozenge \wu) = 0}.$$
The following fixed point over-approximates $L$:
\begin{align*}
	\lo := \left[ \mu X\;.\; \pre_{\Fu}(X) \cup \wu \right]^c.
\end{align*}

\begin{theorem}
	$\lo$ is an over-approximation of $L$.
\end{theorem}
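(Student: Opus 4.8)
The plan is to show that every state $s$ with $\sup_{\Cont\in\Pi_S} P_s^\Cont(\Sys\models\lozenge\wu) > 0$ lies in $\mu X\;.\;\pre_{\Fu}(X)\cup\wu$; taking complements then gives $L\subseteq\lo$. Equivalently, writing $M := \mu X\;.\;\pre_{\Fu}(X)\cup\wu$ for the least fixed point, I would argue the contrapositive: if the abstract state $\xh$ containing $s$ is \emph{not} in $M$, then $P_s^\Cont(\Sys\models\lozenge\wu) = 0$ for every stationary $\Cont$.

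First I would unfold $M$ as the increasing union $M = \bigcup_{n} M_n$ with $M_0 = \wu$ and $M_{n+1} = \pre_{\Fu}(M_n)\cup\wu$. The key structural observation is the dual: if $\xh\notin M$, then for \emph{every} input $u\in\Us$ we have $\Fu(\xh,u)\cap M = \emptyset$, i.e. $\Fu(\xh,u)\subseteq M^c$. Now I would connect this back to the CMP. By the definition of $\Fu$ in \eqref{eq:def fu}, $\Fu(\xh,u)$ \emph{contains} every abstract cell that is reachable with probability bounded away from zero uniformly over $\xh$ — but that is an under-approximation, so I actually need the complementary reading: if $\xh'\notin\Fu(\xh,u)$ it does not immediately follow that the transition probability is zero. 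This is the subtlety. The correct move is to observe that $M^c = [\lo]$ is a set of cells closed under the property "no cell of $M$ is an $\Fu$-successor." For the CMP, what I really want is a set of continuous states from which, under every stationary policy, $\wu$ has probability zero of ever being reached.

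I would therefore reason as follows. Let $s\in\lo$, and let $\Cont\in\Pi_S$ be arbitrary. Define $g(s) := P_s^\Cont(\Sys\models\lozenge\wu)$, which satisfies the Bellman fixed-point equation $g(s) = \mathbf 1_{\wu}(s) + \mathbf 1_{\wu^c}(s)\int_{\Xs} g(s')\,\Ker(ds'\mid s,\Cont(s))$, and is in fact the least nonnegative solution. I claim $g\equiv 0$ on $\lo$: it suffices to show $\mathbf 1_{\lo}$ times the integral term vanishes, i.e. that from any $s\in\lo$ the kernel $\Ker(\cdot\mid s,\Cont(s))$ assigns zero mass to $\lo^c = M$. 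Here I use that $M$ is built from $\wu$ by $\pre_{\Fu}$, and that cells \emph{outside} $M$ have, by the contrapositive of \eqref{eq:def fu} together with the "closed set" construction, the property that $\bigcup_{\xh'\in M}\xh'$ is reached with probability zero from every $s$ in a cell of $M^c$ — this is where I would need to be careful and possibly invoke a continuity or measure-theoretic hypothesis, mirroring \REFprop{prop:cont}, to upgrade "not an $\Fu$-successor" to "probability zero." The cleanest route: show $M^c$ is \emph{absorbing} in the CMP (probability-one stays in $M^c$ under any policy), because any escape from $\bigcup_{\xh\in M^c}\xh$ would, by \eqref{eq:def fo}/\eqref{eq:def fu}, have to register as an $\Fu$-transition into $M$ for \emph{some} cell, contradicting $M^c\cap\pre_{\Fu}(M)^{\text{-preimage}} = \emptyset$. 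Once $M^c$ is absorbing and disjoint from $\wu\subseteq M$, we get $P_s^\Cont(\Sys\models\lozenge\wu) = 0$ for all $s\in\lo$, hence $\lo\subseteq L$ — wait, the theorem claims $L\subseteq\lo$, so I would actually run the inclusion in the stated direction: every a.s.-losing state must be captured by $\lo$ because $M$ only collects states that provably can reach $\wu$ with positive probability (via a chain of $\Fu$-transitions, each carrying probability $\geq\varepsilon$ by \REFprop{prop:finite escape}-style reasoning), so $M\subseteq L^c$, i.e. $L\subseteq M^c = \lo$.

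The main obstacle, as flagged above, is the direction of the approximation in the $\Fu$/$\Fo$ definitions: $\pre_{\Fu}$ uses the \emph{under}-approximation $\Fu$, so membership in $M$ soundly certifies positive-probability reachability of $\wu$ (each $\Fu$-edge gives a uniform $\varepsilon$ lower bound, chain them over finitely many steps), giving $M\subseteq L^c$ and the theorem; but one must check that this is the \emph{only} way to reach $\wu$ with positive probability — which it is not in general, so the honest statement is precisely an \emph{over}-approximation of $L$, and the proof only needs the easy inclusion $M\subseteq L^c$. I would thus structure the proof as: (i) show $M\subseteq L^c$ by exhibiting, for each $\xh\in M$, a stationary policy and a finite horizon along which $\wu$ is hit with probability $\geq\varepsilon^{n}>0$, using induction on the rank $n$ with $\xh\in M_n$ and \REFprop{prop:finite escape} for the base/step; (ii) conclude $L\subseteq M^c = \lo$.
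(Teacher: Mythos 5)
Your final plan --- show $M := \lo^c \subseteq L^c$ by induction on the fixed-point rank, chaining the uniform $\varepsilon$ lower bounds that each $\Fu$-edge provides, and then take complements to get $L\subseteq\lo$ --- is correct and is essentially the paper's (much terser) argument, which simply asserts that every $\xh\in\lo^c$ reaches $\wu$ with positive probability from all of its continuous states ``by construction.'' The long middle portion of your write-up, which tries to establish the reverse inclusion by showing $M^c$ is absorbing, is a detour you rightly abandon (an $\Fu$-under-approximation cannot certify zero escape probability); only the easy direction $M\subseteq L^c$ is needed, and your closing two-step outline delivers it.
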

\ifmaximal
	\begin{proof}
		We show the contra-positive, i.e.\ $\lo^c\subseteq L^c $.
		Consider any abstract state $\xh\in \lo^c$.
		By construction, from all the continuous states $x\in \xh$, $\wu$ is reached with non-zero probability.
		Hence $\xh \notin L$.
	\end{proof}
\else
	The proof can be found in the longer version of this paper \cite{majumdar2019symbolic}.
\fi

Once we have an over-approximation of $L$, the stochastic kernel of the CMP becomes contractive over $\mathcal S\backslash(\lo\cup \wu)$ under mild continuity assumptions. Then approximate computational techniques in the literature on finite-horizon reachability can be utilised to find $P_s^\Cont(\Sys \models \lozenge \wu)$ with tunable error bounds \cite{SA13,SAH12,TMKA17}.

\begin{remark}\label{rem:wrong assumption}
	If \REFass{ass:markov policy} does not hold, then $\W(\Sys,\Cont^\ast)$ would be the 
	``largest a.s.\ winning region achievable using stationary policies'' contained in, but not necessarily equal to, the ``true a.s.\ winning region.''
	As a result, $\wo$ would only be an over-approximation of $\W(\Sys,\Cont^\ast)$, and not necessarily an over-approximation of the true a.s.\ winning region anymore.
	Nevertheless, the final controller obtained using \REFalg{alg:compute wu} would still be a sound a.s.\ winning controller, but with a smaller domain than the optimal a.s.\ winning controller.
	Moreover, $\lo$ would still be an over-approximation---but a more conservative one---of the ``true minimal a.s.\ losing region.''
\end{remark}

\section{Computation of the abstraction}
\label{sec:abs computation}

\noindent\textbf{The dynamical system.}
We consider sampled-time continuous state dynamical systems with additive stochastic disturbance.
The system is formalized using the tuple $\Sigma = (\Xs,\mathcal U,f,t_w)$, where $\Xs \subset \mathbb{R}^n$ is the state space, $\mathcal U \subset \mathbb{R}^m$ is the finite input space, $f:\Xs\times \mathcal U \rightarrow \Xs$ is the nominal state transition function and $t_w:\mathbb{R}^n\rightarrow \mathbb R_{\ge 0}$ is the density function of the stochastic disturbance.
The state update of $\Sigma$ is given as:
\begin{equation}\label{eq:random traj}
	s(k+1) = f(s(k),u(k)) + w(k),\quad k\in\mathbb N, 
\end{equation}
where $s(k)\in \Xs$ and $u(k)\in \mathcal U$ are the state and input at the $k^{\text{th}}$ time instant, $w(k)$ is a random variable with the density function $t_w(\cdot)$, and $s(k+1)$ is the state at the $(k+1)^{\text{st}}$ time instant.

The random variables $\set{w(k)}_{k\in \mathbb{N}}$ are pairwise independent with the same density function $t_w(\cdot)$. We can write the system as a CMP $\Sys = \left(\Xs, \mathcal U, \Ker\right)$ with the stochastic kernel $\Ker(A\mid s,u) = \int_{A} t_w(s'-f(s,u))ds' $ for all $A\in \mathcal{B}(\mathcal S)$.
For the construction of the abstraction we assume that $t_w(\cdot)$ is piecewise continuous and $f(\cdot,u)$ is continuous for all $u\in\mathcal U$.

\medskip

\noindent\textbf{The abstraction.}
We assume that $\Xs = \Xs'\cup\{\phi\}$, where $\Xs'$ is a compact hyper-rectangular working region of the system and $\phi$ is a sink state representing the complement of $\Xs'$. The disturbance has a compact support $D\subset \mathbb{R}^n$. Let $\Xh'$ be a hyper-rectangular partition of $\Xs'$.
The overall abstract state space is $\Xh = \Xh'\cup \set{\phi}$.
Given an abstract state $\xh = \cell{a,b} \in \Xh'$ and a control input $u\in \mathcal U$, we denote the approximate \emph{nominal} reachable set of $\Sys$ by $\reach(\xh,u)$ s.t.\ 
\begin{equation}
\label{eq:psi}
\reach(\xh,u)\supseteq \bigcup_{s\in \mathsf{cl}(\xh)} f(s,u),
\end{equation}
where $\mathsf{cl}(\xh)$ is the closure of the set $\xh$.
Note that $\reach(\xh,u)$ can be computed using any reachability analysis method for deterministic dynamical systems \cite{coogan2015efficient,dang2012reachability}.\\
Define two functions $S_1,S_2:\Xh\times \mathcal U \rightarrow 2^{\mathbb R^n}$ s.t.\ 
\begin{align}
	&S_1:(\xh,u)\mapsto \overline{D}\oplus \reach(\xh,u)  \quad\text{and} \label{eq:S_1}\\
	&S_2:(\xh,u)\mapsto  \underline{D}\ominus (-\reach(\xh,u)), \label{eq:S_2}
\end{align}
where $\overline{D}\supseteq D$ is any over-approximation of the support of disturbance $D$ and 
$ \underline{D} \subseteq D$ is any compact under-approximation of $D$ over which $t_w(\cdot)$ is strictly positive.
The operators $\oplus$ and $\ominus$ are Minkowski sum and Minkowski difference of two sets, respectively, and the minus sign in $(-\reach(\xh,u))$ is applied to all elements. 
%
%
\begin{theorem}\label{thm:sound abstraction}
	Let $\Sigma = (\Xs,\mathcal U,f,t_w)$ be a dynamical system and $\Sys = \left(\mathcal S, \mathcal U, \Ker\right)$ be the CMP induced by $\Sigma$.
	Define $\Abs = (\Xh,\mathcal U,\overline{T},\underline{T})$ s.t.:
	\begin{align}
		&\overline{T}(\xh,u) := \set{\xh'\in \Xh \mid (\xh' \neq \phi)\Rightarrow (\xh'\cap S_1(\xh,u)\neq \emptyset ) \wedge\nonumber\\
		&\hspace{3.4cm}  (\xh' = \phi)\Rightarrow (S_1(\xh,u)\not\subseteq \Xs')},\label{eq:Tu}\\
		&\underline{T}(\xh,u) := \set{\xh'\in \Xh \mid (\xh' \neq \phi)\Rightarrow  (\lambda(\xh'\cap S_2(\xh,u))> 0) \wedge\nonumber\\
		&\hspace{3.4cm} (\xh' = \phi)\Rightarrow \lambda(S_2(\xh,u)\backslash\mathcal S')>0},
		\label{eq:To}
	\end{align}
	where $ \lambda(\cdot)$ gives the Lebesgue measure (volume) of a set. 
	Then $\Abs$ is an abstraction of $\Sys$.
\end{theorem}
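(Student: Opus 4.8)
The plan is to verify the three requirements of \REFdef{def:abstraction} for $\Abs=(\Xh,\mathcal U,\overline T,\underline T)$ directly: the inclusion $\underline T(\xh,u)\subseteq\overline T(\xh,u)$, the over-approximation property \eqref{eq:def fo} of $\overline T$, and the under-approximation property \eqref{eq:def fu} of $\underline T$. Throughout I would take $\xh\subseteq\mathbb R^n$ to be a genuine partition cell (the sink $\phi$ is handled by its defining convention that its reachable set stays outside the working region), and I would use two observations repeatedly: since $\xh\subseteq\mathsf{cl}(\xh)$, every $s\in\xh$ satisfies $f(s,u)\in\reach(\xh,u)$ by \eqref{eq:psi}; and since $D=\supp t_w$, any $w$ with $t_w(w)>0$ lies in $D\subseteq\overline D$. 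Unfolding the Minkowski difference, $S_2(\xh,u)=\{x\in\mathbb R^n\mid x-r\in\underline D\text{ for all }r\in\reach(\xh,u)\}$, so $s''\in S_2(\xh,u)$ forces $s''-f(s,u)\in\underline D$ for every $s\in\xh$.

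For $\underline T\subseteq\overline T$, I would first show $S_2(\xh,u)\subseteq S_1(\xh,u)$. Since $\reach(\xh,u)\supseteq\{f(s,u):s\in\xh\}\ne\emptyset$, pick $r_0\in\reach(\xh,u)$; any $x\in S_2(\xh,u)$ then has $x-r_0\in\underline D\subseteq\overline D$, hence $x\in r_0+\overline D\subseteq\reach(\xh,u)\oplus\overline D=S_1(\xh,u)$. With $S_2\subseteq S_1$, whichever witness puts $\xh'$ into $\underline T(\xh,u)$ via \eqref{eq:To} (a point of $\xh'\cap S_2(\xh,u)$ if $\xh'\ne\phi$, or a point of $S_2(\xh,u)\setminus\Xs'$ if $\xh'=\phi$) also witnesses $\xh'\in\overline T(\xh,u)$ via \eqref{eq:Tu}.

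For \eqref{eq:def fo}, suppose $\Ker(\xh'\mid s,u)>0$ for some $s\in\xh$. If $\xh'\ne\phi$, then $\int_{\xh'}t_w(s'-f(s,u))\,ds'>0$ yields a point $s'\in\xh'$ with $t_w(s'-f(s,u))>0$, so $s'\in f(s,u)+\overline D\subseteq\reach(\xh,u)\oplus\overline D=S_1(\xh,u)$, i.e.\ $\xh'\cap S_1(\xh,u)\ne\emptyset$ and $\xh'\in\overline T(\xh,u)$. If $\xh'=\phi$, then for the truncated kernel $\Ker(\phi\mid s,u)=\int_{\mathbb R^n\setminus\Xs'}t_w(s'-f(s,u))\,ds'>0$, so some $s'\notin\Xs'$ has $t_w(s'-f(s,u))>0$; then $s'\in S_1(\xh,u)\setminus\Xs'$, hence $S_1(\xh,u)\not\subseteq\Xs'$ and $\phi\in\overline T(\xh,u)$.

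The main obstacle is \eqref{eq:def fu}, where a single $\varepsilon>0$ must work for all $s\in\xh$. I would put $m:=\inf_{w\in\underline D}t_w(w)$ and argue $m>0$ — this is the delicate step: $\underline D$ is compact and $t_w$ is strictly positive on it, and by piecewise continuity $\underline D$ meets only finitely many pieces, on each of which $t_w$ agrees with a continuous map having a positive minimum over that (compact) piece. Given $m>0$ and $\xh'\in\underline T(\xh,u)$, I would use that $s''-f(s,u)\in\underline D$ for all $s''\in S_2(\xh,u)$ and $s\in\xh$: if $\xh'\ne\phi$ then for every $s\in\xh$
\[
\Ker(\xh'\mid s,u)\ \ge\ \int_{\xh'\cap S_2(\xh,u)}t_w\big(s''-f(s,u)\big)\,ds''\ \ge\ m\,\lambda\big(\xh'\cap S_2(\xh,u)\big)\ =:\ \varepsilon\ >\ 0,
\]
with $\lambda(\xh'\cap S_2(\xh,u))>0$ by \eqref{eq:To}; and if $\xh'=\phi$ the same estimate over $S_2(\xh,u)\setminus\Xs'$ gives $\Ker(\phi\mid s,u)\ge m\,\lambda(S_2(\xh,u)\setminus\Xs')=:\varepsilon>0$ uniformly in $s\in\xh$, again using \eqref{eq:To}. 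Combining the three parts shows $\Abs$ is an abstraction of $\Sys$.
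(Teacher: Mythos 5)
Your proof is correct in outline and, for the over-approximation condition \eqref{eq:def fo}, essentially coincides with the paper's argument (unpack $\Ker(\xh'\mid s,u)>0$ into a witness $s'=f(s,u)+w$ with $w\in D\subseteq\overline D$, then use $f(s,u)\in\reach(\xh,u)$). Two things you do that the paper does not: you verify explicitly that $\underline T(\xh,u)\subseteq\overline T(\xh,u)$ via $S_2(\xh,u)\subseteq S_1(\xh,u)$, which is needed for $\Abs$ to be a transition system in the sense of the paper's definition and which the paper leaves implicit; and you spell out the $\xh'=\phi$ cases, which the paper dismisses with ``a similar reasoning holds.'' Both are welcome.

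Where you genuinely diverge is the extraction of the uniform $\varepsilon$ in \eqref{eq:def fu}, and this is also the one place where your argument is weaker than the paper's. You lower-bound the integrand pointwise by $m:=\inf_{w\in\underline D}t_w(w)$ and claim $m>0$ from compactness of $\underline D$, strict positivity of $t_w$ on $\underline D$, and piecewise continuity. That inference is not airtight: for a merely piecewise continuous density, the continuous extension of $t_w$ to the closure of a piece can vanish at a boundary point of that piece even though $t_w$ itself (whose value there is inherited from a neighbouring piece) is positive everywhere on $\underline D$; in that case $\inf_{\underline D}t_w=0$ and your bound $\varepsilon=m\,\lambda(\xh'\cap S_2(\xh,u))$ degenerates. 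The paper instead fixes the set $\xh'\cap S_2(\xh,u)$ of positive measure, observes that $s\mapsto\int_{\xh'\cap S_2(\xh,u)}t_w(s'-f(s,u))\,ds'$ is a strictly positive function of $s$ that is continuous on the compact set $\mathsf{cl}(\xh)$ (continuity of $f$ plus piecewise continuity of $t_w$ and dominated convergence), and takes $\varepsilon$ to be its minimum over $\mathsf{cl}(\xh)$. Integration smooths the discontinuities of $t_w$, so compactness is applied to a genuinely continuous function and no pointwise lower bound on the density is needed. Your route works verbatim if one additionally assumes $\inf_{\underline D}t_w>0$ (or chooses $\underline D$ so that this holds, which is always possible by shrinking $\underline D$ slightly); as written, you should either add that hypothesis or switch to the compactness-in-$s$ argument for this step.
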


\ifmaximal
	\begin{proof}
		We show that $\overline{T}$ and $\underline{T}$ satisfy the properties of $\Fo$ and $\Fu$ as formalized in \REFdef{def:abstraction}.
		For \eqref{eq:def fo}, consider any pair of abstract states $\xh,\xh'\in \Xh'$ and input $u\in U$ and there exists $s\in \xh$, $\Ker(\xh'\mid s,u)>0$. We show that $\xh'\in \overline{T}(\xh,u)$:
		\begin{align*}
		 & \int_{\xh'} t_w(s'-f(s,u))ds'>0\Rightarrow\int_{\xh'\ominus\set{f(s,u)}} t_w(w)dw>0\\
		 & \Rightarrow (\xh'\ominus\set{f(s,u)})\cap D\neq\emptyset\\
		 & \Rightarrow  \exists w\in D,  \exists s'\in\xh' \text{ s.t.}\quad w=s'-f(s,u)\\
		 & \Rightarrow  \exists w\in D,  \exists s'\in\xh' \text{ s.t.}\quad s'=f(s,u)+w.
		\end{align*}
		At the same time we know that $f(s,u)\in \reach(\xh,u)$ since $s\in \xh$. Then,
		\begin{align*}
		 & s'\in S_1(\xh,u)\Rightarrow \xh'\cap S_1(\xh,u)\neq\emptyset
		 \Rightarrow \xh'\in \overline{T}(\xh,u).
		 \end{align*}
		 A similar reasoning holds for the case of $\xh'=\phi$.\\
		Now we show that $\underline{T}$ satisfies the condition given in \eqref{eq:def fu}. Take $\xh\in \Xh'$, input $u\in U$, and $\xh'\in \underline{T}(\xh,u)$ s.t.\ $\xh'\neq\emptyset$. Then $\lambda(\xh'\cap S_2(\xh,u))> 0$ according to \eqref{eq:To}. For any $s'\in \xh'\cap S_2(\xh,u)$, we have
		\begin{align*}
		 & s'\in S_2(\xh,u)\Rightarrow \set{s'}\oplus(-\reach(\xh,u))\subseteq\underline{D}\\
		 & \Rightarrow s'-f(s,u) \subseteq\underline{D}\quad \forall s\in\mathsf{cl}(\xh)\\
		 & \Rightarrow \Ker(\xh'\mid s,u) \ge \int_{\xh'\cap S_2(\xh,u)} t_w(s'-f(s,u))ds'>0
		 \end{align*}
		The right-hand side is strictly positive since the integrand is strictly positive and the domain of integration has a positive measure. It is also assumed that $f$ is continuous and $t_w$ piecewise continuous. Therefore, we have a positive function over the compact domain $\mathsf{cl}(\xh)$, which will have a positive minimum:   
	$$\exists \varepsilon>0\;.\; \forall s\in \mathsf{cl}(\xh)\;.\;\Ker(\xh'\mid s,u)\ge\varepsilon \;\;\Rightarrow\;\;\xh'\in \Fu(\xh,u).$$
	\end{proof}
\else
	The proof can be found in the longer version of this paper \cite{majumdar2019symbolic}.
\fi
	
	The abstraction procedure can be summarized as follows: first compute the approximate nominal reachable set $\reach(\xh,u)$ in \eqref{eq:psi}, then take the Minkowski sum and difference for $S_1,S_2$ in \eqref{eq:S_1}-\eqref{eq:S_2}, and finally compute the transition relations \eqref{eq:To}-\eqref{eq:Tu}. \REFfig{fig:abs computation} illustrates the abstraction procedure for a 2-d system and when $D$ is of the form $[-d_1,d_1]\times [-d_2,d_2]$.  
	
	\begin{figure}
	\begin{tikzpicture}[scale=0.7]
		\newlength{\dmaxA};
		\setlength{\dmaxA}{1.2cm};
		\newlength{\dmaxB};
		\setlength{\dmaxB}{1cm};
		\coordinate (A)	at	(0,0);
		\newlength{\lA};
		\setlength{\lA}{1cm};
		\newlength{\wA};
		\setlength{\wA}{0.8cm};
		\draw[fill=gray!50]	 (A)	rectangle	($(A)+(\lA,\wA)$)	node[pos=0.5]	{$\xh$};
		\draw[xstep=\lA,ystep=\wA,gray!50,thin] (-1,-2) grid (6,3);		
		
		\coordinate (B)	at	(2.9,-0.3);
		\newlength{\lB};
		\setlength{\lB}{1.8cm};
		\newlength{\wB};
		\setlength{\wB}{1.5cm};
		\draw[fill=yellow,opacity=0.2]	 ($(A)+(\lA,-2*\wA)$)	rectangle	($(A)+(6*\lA,3*\wA)$);
		\draw[fill=green,opacity=0.2]	($(A)+(3*\lA,0)$)		rectangle	($(A)+(5*\lA,\wA)$);
		\draw	[thick] (B)	rectangle	($(B)+(\lB,\wB)$);

		\draw[fill=red,opacity=0.1, dashed]	($(B)-(\dmaxA,\dmaxB)$)	rectangle	($(B)+(\dmaxA,\dmaxB)$);
		\draw[fill=red,opacity=0.1,dashed]	($(B)+(\lB-\dmaxA,\wB-\dmaxB)$)	rectangle	($(B)+(\lB+\dmaxA,\wB+\dmaxB)$);
		
		\draw[thick,blue] ($(B)-(\dmaxA,\dmaxB)$)	rectangle	($(B)+(\lB+\dmaxA,\wB+\dmaxB)$);
		
		\draw[thick,red] ($(B)+(\lB-\dmaxA,\wB-\dmaxB)$)	rectangle	($(B)+(\dmaxA,\dmaxB)$);
		
		\draw[<->]	($(B)+(-\dmaxA,0)$)	--	node[above]	{$d_1$}	($(B)+(0,0)$);
		\draw[<->]	($(B)+(\lB,\wB)$)	--	node[above]	{$d_1$}	($(B)+(\lB+\dmaxA,\wB)$);
		\draw[<->]	($(B)+(\lB,\wB)$)	--	node[left]	{$d_2$}	($(B)+(\lB,\wB+\dmaxB)$);
		\draw[<->]	($(B)+(0,0)$)	--	node[left]	{$d_2$}	($(B)+(0,-\dmaxB)$);
		
%
%
%
		
	\end{tikzpicture}
	\caption{Illustration of abstraction computation: given the abstract state $\xh$ (filled with grey) and some control input $u$, first the nominal reachable set is over-approximated (black rectangle). Next, the sets $S_1$ (blue rectangle) and $S_2$ (red rectangle) are computed. Finally, the images of the transition functions $\Fo$ (filled with yellow) and $\Fu$ (filled with green) are the abstract states intersecting with $S_1$ and $S_2$ respectively.}
	\label{fig:abs computation}
\end{figure}

\subsection{Computation for mixed-monotone systems}
If the function $f(\cdot,u)$ is mixed-monotone for every $u\in U$ and the partition sets are hyper-rectangles, then the nominal reachable set $\reach(\xh,u)$ can be computed particularly efficiently.
We recall the definition of mixed-monotonicity \cite{coogan2015efficient}.
\begin{definition}
	Let $g:\Xs\rightarrow \Xs$ be a function, and $\leq_\Xs$ be an order relation on $\Xs$ induced by positive cones.
 	The function $g$ is called mixed-monotone w.r.t.\ $\leq_\Xs$ (or simply mixed-monotone if $\leq_\Xs$ is obvious from the context) if there exists a function $h:\Xs\times \Xs\rightarrow\Xs$---called the decomposition function---with the following properties:
	\begin{enumerate}
		\item $\forall x\in \Xs\;.\; h(x,x) = g(x)$,
		\item $\forall x_1,x_2,y\in \Xs\;.\;(x_1\leq_\Xs x_2) \Rightarrow (h(x_1,y)\leq_\Xs h(x_2,y))$, and
		\item $\forall x,y_1,y_2\in \Xs\;.\;(y_1\leq_\Xs y_2) \Rightarrow (h(x,y_2)\leq_\Xs h(x,y_1))$.
	\end{enumerate}
\end{definition}

Intuitively, a mixed-monotone function can be decomposed into an increasing and a decreasing component.
This phenomenon can be seen from the definition of the decomposition function.
The following proposition shows a fast over-approximation method of the image of a rectangular set under a mixed-monotone function.
\begin{proposition}[{\cite[Thm.~1]{coogan2015efficient}}]\label{prop:approx reach set mixed-monotone}
	Let $g$ be a mixed-monotone function with the decomposition function $h$, and $\cell{a,b}\subseteq \Xs$ be any hyper-rectangle.
	The image of $\cell{a,b}$ under $g$ can be over-approximated as $\cell{h(a,b),h(b,a)}$.
\end{proposition}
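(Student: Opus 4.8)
The statement to prove is Proposition \ref{prop:approx reach set mixed-monotone}: the image of a hyper-rectangle $\cell{a,b}$ under a mixed-monotone function $g$ with decomposition function $h$ is contained in $\cell{h(a,b), h(b,a)}$. The plan is to take an arbitrary point $x \in \cell{a,b}$ and show that $h(a,b) \leq_\Xs g(x) \leq_\Xs h(b,a)$, which immediately puts $g(x)$ in the claimed rectangle.

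First I would fix $x \in \cell{a,b}$, so that $a \leq_\Xs x \leq_\Xs b$ by definition of the order interval. The key identity from the definition of a decomposition function is $g(x) = h(x,x)$ (property 1). The strategy is then to bound $h(x,x)$ on both sides by successively replacing each argument with an endpoint and invoking the monotonicity properties of $h$. For the lower bound: since $a \leq_\Xs x$, property 2 (monotonicity in the first argument) gives $h(a, x) \leq_\Xs h(x, x)$; since $x \leq_\Xs b$, property 3 (anti-monotonicity in the second argument) gives $h(a, b) \leq_\Xs h(a, x)$. Chaining these, $h(a,b) \leq_\Xs h(x,x) = g(x)$. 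Symmetrically, for the upper bound: $x \leq_\Xs b$ with property 2 gives $h(x,x) \leq_\Xs h(b,x)$, and $a \leq_\Xs x$ with property 3 gives $h(b,x) \leq_\Xs h(b,a)$, hence $g(x) = h(x,x) \leq_\Xs h(b,a)$.

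Combining the two bounds yields $h(a,b) \leq_\Xs g(x) \leq_\Xs h(b,a)$, i.e. $g(x) \in \cell{h(a,b), h(b,a)}$. Since $x \in \cell{a,b}$ was arbitrary, $g(\cell{a,b}) \subseteq \cell{h(a,b), h(b,a)}$, as claimed.

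There is essentially no main obstacle here — the argument is a direct two-sided chaining of the defining monotonicity properties of the decomposition function, and the only mild subtlety is being careful about the direction of the inequalities (the second argument of $h$ is order-reversing). Since this is a restatement of a known result (\cite[Thm.~1]{coogan2015efficient}), one could also simply cite it; the short self-contained argument above is included for completeness.
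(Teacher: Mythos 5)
Your proof is correct. The paper itself gives no proof of this proposition---it is stated as a direct citation of Thm.~1 of Coogan and Arcak---so there is nothing internal to compare against; your argument is the standard one from that reference: fix $x\in\cell{a,b}$, write $g(x)=h(x,x)$, and chain property~2 (order-preserving in the first argument) with property~3 (order-reversing in the second) to get $h(a,b)\leq_\Xs h(a,x)\leq_\Xs h(x,x)$ and $h(x,x)\leq_\Xs h(b,x)\leq_\Xs h(b,a)$, with the inequality directions handled correctly throughout.
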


For mixed-monotone $f(\cdot,u)$ with decomposition function $h_u$, the function $\reach(\xh,u)$ 
can be computed using \REFprop{prop:approx reach set mixed-monotone} as $\reach(\xh,u) = \cell{h_u(a,b),h_u(b,a)}$ for $\xh = \cell{a,b}$.

\section{Examples}

We implemented the presented algorithms in an open-source tool called Mascot-SDS (\url{https://gitlab.mpi-sws.org/kmallik/mascot-sds.git}), which is an extension of the tool Mascot \cite{hsu2018multi}, and is built on top of the basic symbolic computation framework of the tool SCOTS \cite{SCOTS}.
All the experiments were performed on a computer with a 3GHz Intel Xeon E7-8857 v2 processor and 1.5 TB memory.

\subsection{Perturbed Van der Pol Oscillator}

Our first example considers the computation of the maximal a.s.\ winning region of an autonomous stochastically perturbed Van der Pol oscillator \cite{Pol}.
The state evolution of the oscillator is given by:
\begin{align*}
	x_1(k+1) &= x_1(k) + x_2(k)\tau + w_1(k)\\
	x_2(k+1) &= x_2(k) + (-x_1(k)+(1-x_1(k)^2)x_2(k))\tau + w_2(k),
\end{align*}
where the sampling time $\tau$ is set to $0.1s$ and $(w_1(k),w_2(k))$ is a pair of stochastic noise signals at time $k$ drawn from a piecewise continuous density function with a compact support $D = [-0.02,0.02]\times [-0.02,0.02]$. 
Note that for the computation of the winning set, we do not need the actual density function as discussed in the previous section.
We consider a safety specification $\square\Xs'$ for staying within the working area $\Xs'=[-0.5,0.5]\times [-0.5,0.5]$, as well as a B\"uchi specification $\square\lozenge B$ for repeatedly reaching the target set $B=[-1.2,-0.9]\times [-2.9,-2]$ (green rectangle in \REFfig{fig:vanderpol}).
Our algorithm is able to compute the under and over-approximation of the set of a.s.\ winning region.
In \REFfig{fig:vanderpol}, the under-approximation $\wu$ is shown in grey and $\wo\setminus \wu$ is shown in blue.

It turns out that when the noise is treated as worst case, then there exists a deterministic value of the noise for which the oscillator trajectory \emph{never} reaches the target from all the initial states inside the domain, thus violating the specification.
So the winning region is empty. 
A trajectory with a fixed deterministic perturbation that misses the target all the time is shown in black in \REFfig{fig:vanderpol}.

On the other hand, when the noise is treated as stochastic, then there are initial states from where the perturbed trajectory always eventually reaches the target polytope.
Hence, the specification is satisfied. A trajectory with stochastic perturbation and the initial state $I$  is shown in red in \REFfig{fig:vanderpol}.
\REFtab{table:performance summary} summarizes the abstraction parameters used in our experiment, ratio of the computed volume of $\wu$ to the computed volume of $\wo$, and the computation time.

%

\subsection{Controlled perturbed vehicle}

Our second example is a controller synthesis problem for a perturbed sampled-time 3-d Dubins' vehicle \cite{Vehicle}.
We consider almost sure satisfaction of the B\"uchi specification while avoiding obstacles in the state space.
Although we did not discuss avoidance of obstacle in the theory part, this can be easily handled by redefining the working region $\Xs'$ of the system by excluding the obstacles.
Thus given a hyper-rectangular working region $[0,2]\times [0,3]\times [-\pi,\pi] $, and an obstacle $[0.8,1.2]\times [1,1.4]\times [-\pi,\pi]$ within that working region, we define  $\Xs'=[0,2]\times [0,3]\times [-\pi,\pi] \setminus [0.8,1.2]\times [1,1.4]\times [-\pi,\pi]$.
The system dynamics is given as: when $u\neq 0$,
\begin{align*}
	&x_1(k+1) = x_1(k) + \frac{V}{u}\sin(x_3(k) + u\tau) - \frac{V}{u}\sin(x_3(k)) + w_1(k)\\
	&x_2(k+1) = x_2(k) - \frac{V}{u}\cos(x_3(k) + u\tau) + \frac{V}{u}\cos(x_3(k)) + w_2(k)\\
	&x_3(k+1) = x_3(k) + u\tau + w_3(k),
\end{align*}
and when $u = 0$,
\begin{align*}
	&x_1(k+1) = x_1(k) + V\cos(x_3(k))\tau + w_1(k)\\
	&x_2(k+1) = x_2(k) - V\sin(x_3(k))\tau + w_2(k)\\
	&x_3(k+1) = x_3(k) + w_3(k),
\end{align*}
where the sampling time $\tau = 1s$, the constant forward velocity $V=0.1$ (maintained by e.g.\  a low level cruise control system), and $(w_1(k),w_2(k),w_3(k))$ is a collection of stochastic noise samples drawn from a piecewise continuous density function with the support $D = [-0.06,0.06]\times [-0.06,0.06]\times [-0.06,0.06]$.
It is due to this fixed velocity that the vehicle cannot stay stationary (or near stationary) after reaching the target, which makes the synthesis problem with B\"uchi specification much more challenging than the same with normal reachability specification.

When the noise is treated as a worst case adversary, the winning region is empty.
However, when the noise is treated as stochastic, the approximate winning regions $\wu$ and $\wo$ are non-empty as shown in \REFfig{fig:vehicle domain}.
\REFfig{fig:vehicle simulation} shows the simulated trajectory of the vehicle using the synthesized controller.
It was observed that even though the trajectory moves away from the target from time to time, either due to the external noise or due to the constant velocity, it always returns to the target eventually.

We performed the computation for four different levels of discretization granularity, and the results are summarized in \REFtab{table:performance summary}.
It can be observed (from the ratio $\lambda(\wu)/\lambda(\wo)$) that the gap between $\wo$ and $\wu$ monotonically shrinks as the abstract states get smaller, which means that the approximations $\wo$ and $\wu$ get progressively better with refinement of the state space partition.
However, computation time sharply increases with finer discretization.

\ifmaximal 
\subsection{A note on computation time}
\fi
The computation time for $\wu$ reported in \REFtab{table:performance summary} is based on a warm-start of \REFalg{alg:compute wu} by replacing $Y\gets \Xh$ in Line~\ref{line:initialize} with $Y\gets\wo$.
The intuition is that since it is known upfront that $\wu\subseteq \wo$, hence we do not need to consider the set $\Xh\setminus \wo$ in \REFalg{alg:compute wu}.
In practice, the computation time for $\wu$ would be higher than the numbers reported in \REFtab{table:performance summary} had we started with $\Xh$.

\ifmaximal
	In general, we observed that the computation of $\wu$ takes much longer than the computation of $\wo$.
	Our hypothesis is that this is due to the properties of the operators defined in \eqref{eq:cpre}-\eqref{eq:upre}, and how they are used in the computation of $\wu$ and $\wo$.
	For example, because $\Fo(\xh,u)$ is a superset of $\Fu(\xh,u)$ for all $\xh,u$, it can be shown that for a given $Y,Z\subseteq \Xh$, $\apre(Y,Z) \subseteq \upre(Y,Z)$.
	Similarly $\cpre_{\Fo}(Z)\subseteq \upre(Y,Z)$ when $Y\supseteq Z$.
	Moreover, $\Bh\cap \cpre_{\Fo}(Y) \subseteq \Bho \cap (\cpre_{\Fu}(Y) \cup \pre_{\Fo\setminus \Fu}(Y))$.
	Thus each iteration in the inner ``$\mu$'' fixed point would add possibly fewer states in case of $\wu$ than in case of $\wo$.
	Since ultimately the size of $\wo$ and $\wu$ are not very far apart, as shown in Col.~4 of \REFtab{table:performance summary}, hence the iterations for $\wu$ would take many more number of steps than $\wo$.
\fi

\ifmaximal
	\begin{figure*}
\else
	\begin{figure*}[t]
\fi
	\centering
	\subfloat[][]{
		\includegraphics[scale=0.29]{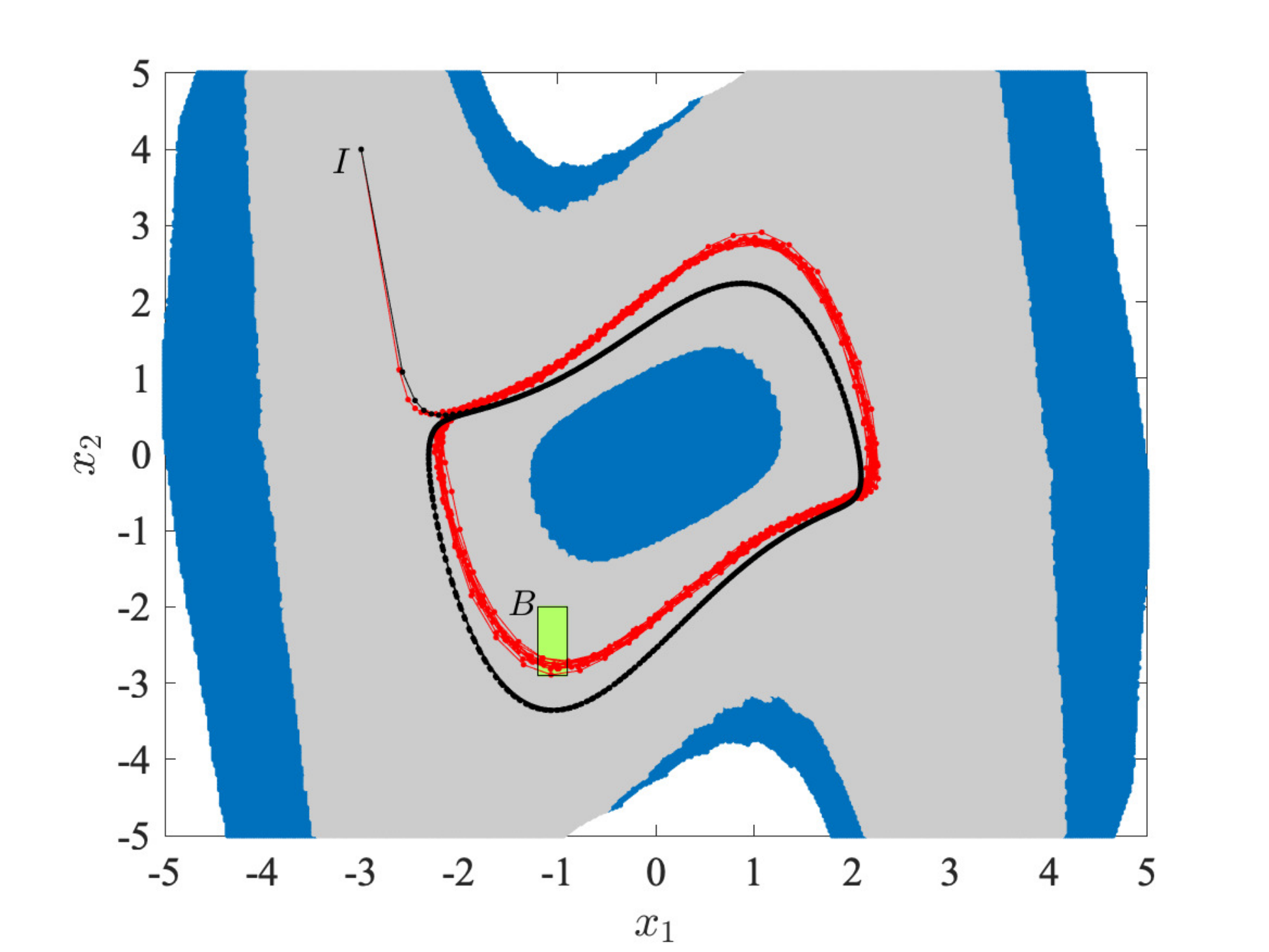}
		\label{fig:vanderpol}
	}
	\subfloat[][]{
		\includegraphics[scale=0.29]{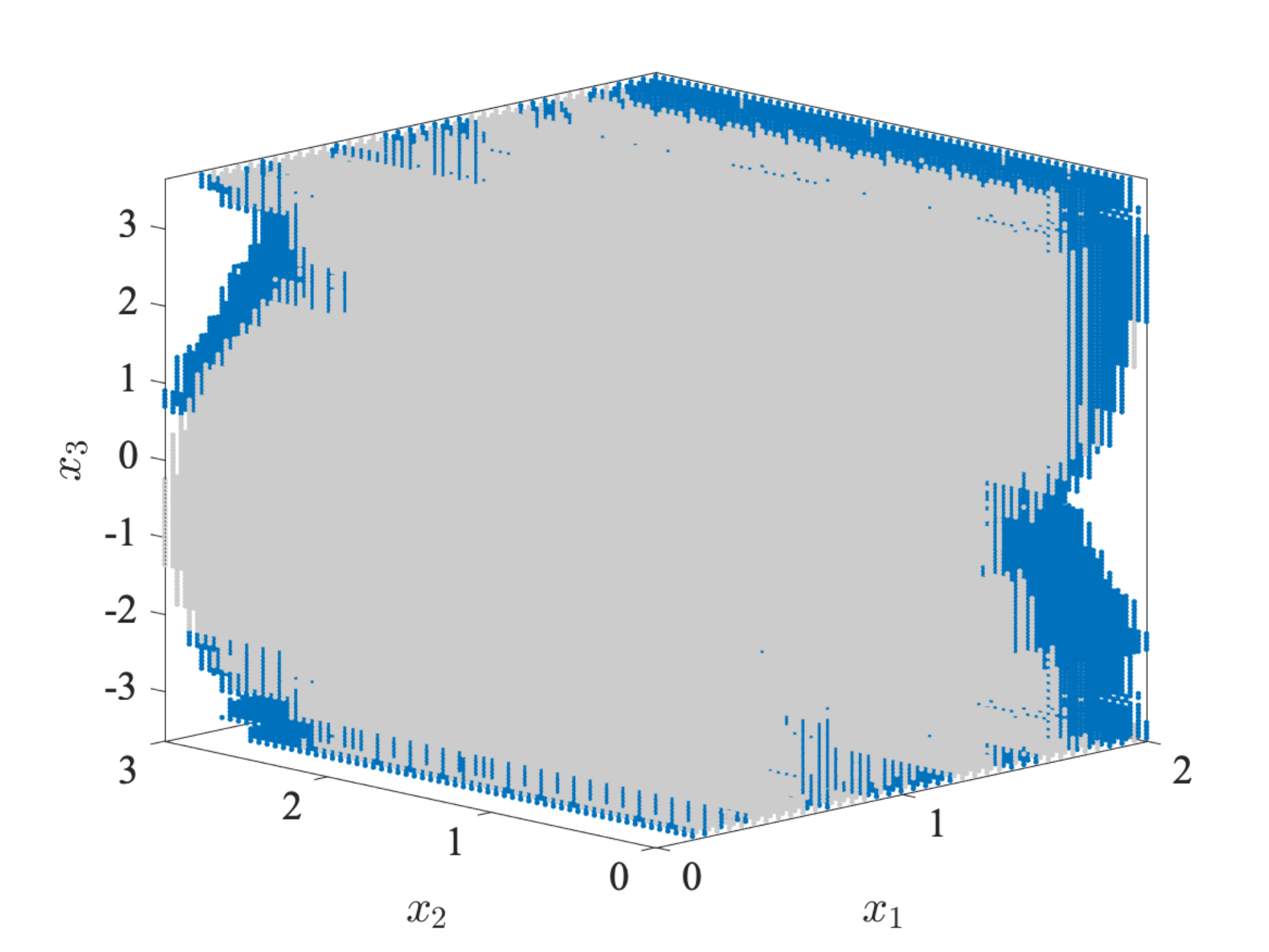}
		\label{fig:vehicle domain}
	}
	\subfloat[][]{
		\includegraphics[scale=0.29]{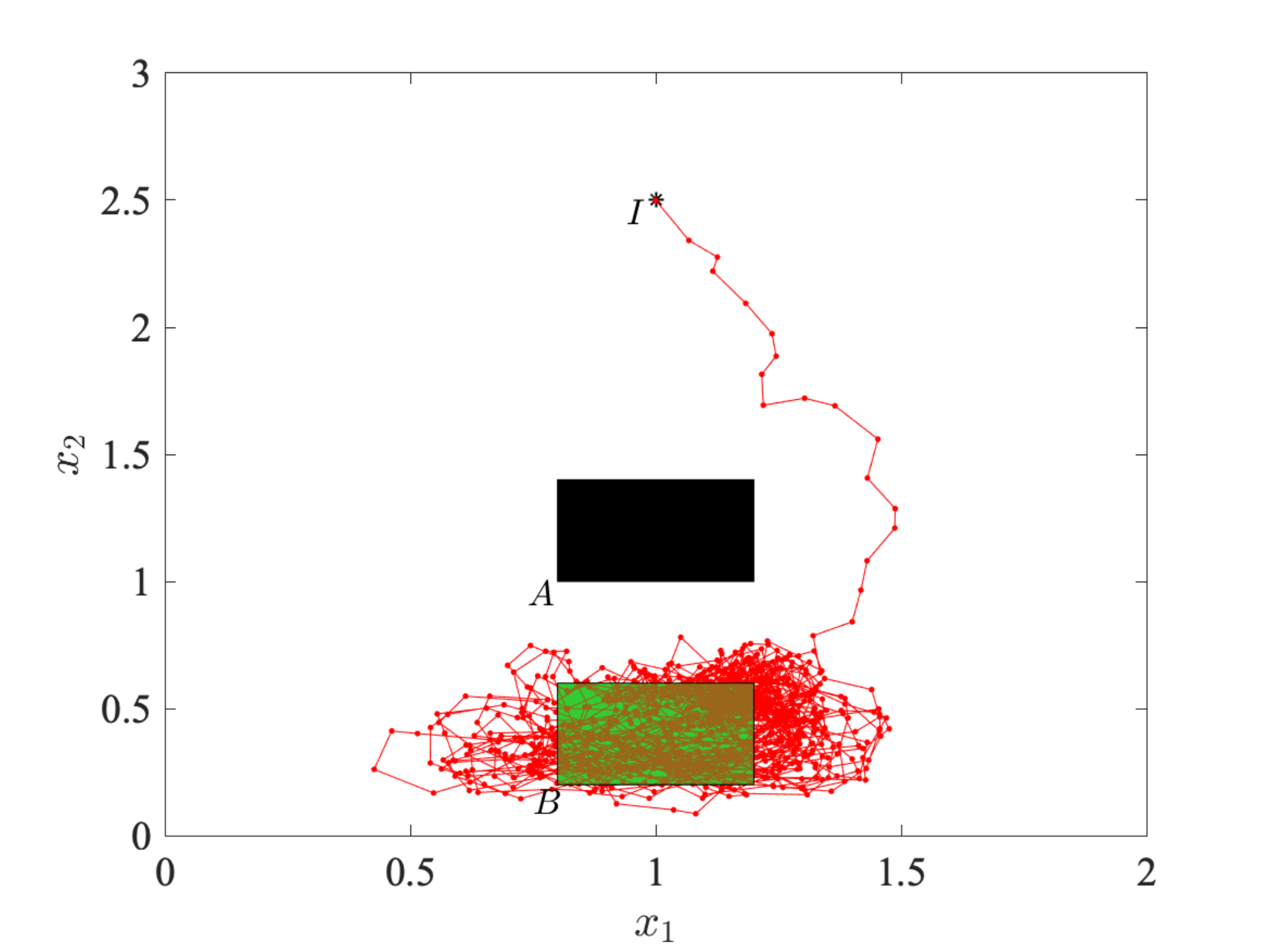}
		\label{fig:vehicle simulation}
	}
	\caption{\protect\subref{fig:vanderpol} The Van der Pol oscillator example:
		$B$ (green box) is the target, $\wu$ is in grey, $\wo\setminus \wu$ is in blue, and $I$ is the initial state for simulation.
		The trajectory with stochastic perturbation is shown in red, and the trajectory with a fixed deterministic perturbation that always misses the target is shown in black.  \protect\subref{fig:vehicle domain} The approximate a.s.\ winning domain for the Dubin's vehicle (for abstract state size $0.1\times 0.1\times 0.1$): $\wu$ is in grey and $\wo\setminus \wu$ is in blue. \protect\subref{fig:vehicle simulation} Simulation of perturbed trajectory for the Dubin's vehicle projected onto $x_1,x_2$ plane. The black box $A$ is the obstacle and the green box $B$ is the target.}
	\label{fig:all experiments}
\end{figure*}

\ifmaximal
	\begin{table*}
\else
	\begin{table*}[t]
\fi
	\begin{tabular}{|>{\centering\arraybackslash}m{3cm}|>{\centering\arraybackslash}m{3cm}|>{\centering\arraybackslash}m{2.2cm}|>{\centering\arraybackslash}m{2cm}|>{\centering\arraybackslash}m{2cm}|>{\centering\arraybackslash}m{1.5cm}|>{\centering\arraybackslash}m{1.5cm}|>{\centering\arraybackslash}m{1.5cm}|}
		\hline
		\multirow{3}{*}{System} & \multirow{3}{*}{\parbox{3cm}{\centering Size of abstract states}} & \multirow{3}{*}{\parbox{2.2cm}{\centering Volume of the worst-case winning region}} & \multirow{3}{*}{$\lambda(\wu)/\lambda(\wo)$} & \multicolumn{3}{c|}{Computation time}\\
		\cline{5-7}
		& & & & \multirow{2}{*}{Abstraction} & \multirow{2}{*}{$\wo$} & \multirow{2}{*}{$\wu$}  \\
		 & & & &  &  & \\
		 \hline 
		Van der pol oscillator & $0.02 \times 0.02$ & empty	& $73.0\%$ & $<1m$ & $12m$  & $416m$ \\
		\hline
		\multirow{4}{*}{Dubins' vehicle} & $0.1\times 0.1\times 0.1$ & empty & $80.0\%$ & $1m$ & $2m$ & $11m$ \\
		 & $0.07\times 0.07\times 0.07$ & empty & $83.8\%$ & $4m$ & $23m$ & $35m$ \\
		 & $0.05\times 0.05\times 0.05$ & empty & $87.2\%$ & $18m$ & $124m$ & $140m$ \\
		 & $0.04\times 0.04\times 0.04$ & empty & $88.9\%$ & $37m$ & $132m$ & $128m$ \\
		 \hline
	\end{tabular}
	\caption{Performance evaluation of our method on the Van der Pol oscillator and the Dubin's vehicle. The 2nd column shows the size of each hyper-rectangular abstract state in the underlying uniform grid, the 3rd column shows the volume of the approximate winning domain when the noise is treated in the usual worst-case sense, the 4th column shows the ratio of the Lebesgue measure (volume) of $\wu$ to $\wo$, and the 5th, 6th, and 7th columns show the computation times of different phases of our algorithm in minutes. Note that in our implementation, the computation of $\wu$ was warm started with already computed $\wo$. Had $\wu$ been computed from scratch, the computation time for $\wu$ would be higher than what is shown in the last column.}
	\label{table:performance summary}
\end{table*}



\section{Future Work}
We are working on three different extensions of this work.
First, we plan to develop computation techniques for the qualitative winning regions for more general Rabin or parity conditions.
Second, we are working on formulating conditions to guarantee convergence of the computations to the actual winning region when the discretization gets finer.
Finally, we plan to improve the scalability of the approach using multi-resolution abstractions.
\begin{acks} 
	1. This research was funded in part by the \grantsponsor{dfg}{Deutsche Forschungsgemeinschaft}{https://www.dfg.de/} project \grantnum{dfg}{389792660-TRR 248}
and by the \grantsponsor{erc}{European Research Council}{https://erc.europa.eu/} under the
Grant Agreement \grantnum[http://www.impact-erc.eu/]{erc}{610150} (ERC Synergy Grant ImPACT).\\
	2. We thank Natsuki Urabe for giving insights on the possibility of trajectories being trapped in a bounded region even in the 
	presence of non-zero escape probability at each step.
\end{acks}

\bibliographystyle{abbrv}
\bibliography{reportbib}

\ifmaximal

\section{Appendix}

\subsection{Proof of \REFprop{prop:measurability of windom}}
Following the steps utilized in \cite[Theorem 7]{TMKA17}, we have that $1-f^\ast(s)$ is lower semi-analytic. Then $\{s\in\mathcal S\,|\, 1-f^\ast(s)<c\}$ is an analytic subset of $\mathcal S$ for all $c\in\mathbb R$. Take a positive sequence $\{c_n\rightarrow 0\}$. The set $\cap_n \{s\in\mathcal S\,|\, f^\ast(s)>1-c_n\} = \{s\in\mathcal S\,|\, f^\ast(s)=1\}$ is also analytic. Every analytic set is universally measurable.

\subsection{Proof of \REFthm{thm:the problem can be broken down}}

	\begin{proof}[Proof of \REFthm{thm:the problem can be broken down}]
	We already know that $P_s^{\Cont}(\mathfrak S\models \square\lozenge B) = 1$ for all $s\in \W(\Sys,\Cont)$ by definition of the winning set. Take any $s\notin W:= \W(\Sys,\Cont)$.
	We make the event conditional on $\tau$ which is the first time the path hits a state in $W$. Then we have
	\begin{align*}
	P_s^{\Cont}(\mathfrak S\models \square\lozenge B) & = \mathbb E_s^{\Cont}\left[P_s^{\Cont}(\mathfrak S\models \square\lozenge B\,|\, s_1,s_2,\ldots,s_n, \tau=n)\right]\\
	& = \sum_{n=0}^{\infty} P_s^{\Cont}(s_1,s_2,\ldots,s_{n-1}\in\mathcal S\backslash W, s_n\in W)\\
	& + P_s^{\Cont}(\mathfrak S\models \square\lozenge B \text{ and } \mathfrak S\models \square \mathcal S\backslash W) .
	\end{align*}
	The sum is the reachability probability and the last term is always non-negative.
	\end{proof}
	

\subsection{Proof of \REFprop{prop:cont}}
\begin{proof}
We take an element $\xh'\in\Xh$ and $u\in\mathcal U$. We show that $\xh'$ belongs to the right-hand side of \eqref{eq:under_alter} if and only if it belongs to the right-hand side of \eqref{eq:def fu}. Observe that
\begin{align*}
\Ker&(\xh'\mid s,u) = \Ker(\mathsf{cl}(\xh')\mid s,u) \\
&= \int_{\mathsf{cl}(\xh')} \Ker(ds'\mid s,u)
 = \int_{\Xs}\mathbf 1_{\mathsf{cl}(\xh')}(s') \Ker(ds'\mid s,u).
\end{align*}
Since the indicator function of a closed set is upper semi-continuous, $ \Ker(\xh'\mid s,u)$ is also upper semi-continuous for all $s\in\mathsf{cl}(\xh)$. Similarly, 
\begin{align*}
\Ker& (\xh'\mid s,u) = \Ker(\mathsf{int}(\xh')\mid s,u)\\
&  = \int_{\mathsf{int}(\xh')} \Ker(ds'\mid s,u)  = 
\int_{\Xs}\mathbf 1_{\mathsf{int}(\xh')}(s') \Ker(ds'\mid s,u).
\end{align*}
Since the indicator function of an open set is lower semi-continuous, $ \Ker(\xh'\mid s,u)$ is also lower semi-continuous for all $s\in\mathsf{cl}(\xh)$.
Therefore, $ \Ker(\xh'\mid s,u)$ is continuous over the domain $s\in\mathsf{cl}(\xh)$, which means it attains its minimum over $s\in\mathsf{cl}(\xh)$. Suppose $\xh'$ belongs to the right-hand side of \eqref{eq:under_alter} and take $\varepsilon = \min_{s\in\mathsf{cl}(\xh)}\Ker(\xh'\mid s,u)$. Since $\Ker(\xh'\mid s,u)$ is positive, $\varepsilon>0$ and $\xh'$ belongs to the right-hand side of \eqref{eq:def fu}. Now suppose $\xh'$ belongs to the right-hand side of \eqref{eq:def fu}. Then there is an $\varepsilon>0$ such that $\Ker(\xh'\mid s,u)\ge \varepsilon$ for all $s\in\xh$. Since $\Ker(\xh'\mid s,u)$ is continuous over $\mathsf{cl}(\xh)$, we get that $\Ker(\xh'\mid s,u)>0$ for all $s\in\mathsf{cl}(\xh)$, which means $\xh'$ belongs to the right-hand side of~\eqref{eq:under_alter}.
\end{proof}

\subsection{Properties of the winning region}
\label{sec:unique winning region}

\begin{proposition}
For any control policy $\Cont$, The set $W := \W(\Sys,\Cont)$ is an absorbing set, i.e., the paths starting from this set will stay in the set a.s.:
\[ P_s^{\Cont}(w[k+1]\in W \,|\, w[k]) = 1,\]
for all $w[k]\in W$.
\end{proposition}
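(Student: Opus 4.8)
The plan is to reduce the absorbing property to a one–step identity for the satisfaction probability $f(\cdot,\Cont)$ and then exploit that $f\le 1$ pointwise. I treat the relevant case of a stationary policy $\Cont\in\Pi_S$, so that the control taken at a state $s$ is the fixed value $\Cont(s)$. Recall from the first part of \REFprop{prop:measurability of windom} (and the measurability of the event $\{\Sys\models\square\lozenge B\}$ established in \cite{TMKA17}) that $W:=\W(\Sys,\Cont)$ is universally measurable and that $s\mapsto f(s,\Cont)$ is a universally measurable function, so the integrals below are well defined.

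First I would establish that for every $s\in\mathcal S$,
\begin{equation}
\label{eq:onestep}
f(s,\Cont)\;=\;\int_{\mathcal S} f(s',\Cont)\,\Ker(ds'\mid s,\Cont(s)).
\end{equation}
This rests on two facts. The event $\{\Sys\models\square\lozenge B\}$ is shift invariant: a path $(s_0,s_1,s_2,\ldots)$ visits $B$ infinitely often iff its one–step shift $(s_1,s_2,\ldots)$ does. And, because $\Cont$ is stationary, under $P_s^{\Cont}$ the first successor $s_1$ has law $\Ker(\cdot\mid s,\Cont(s))$ and, conditionally on $s_1$, the remaining path $(s_1,s_2,\ldots)$ has law $P_{s_1}^{\Cont}$ (the Markov property for the canonical path measure, \cite{hll1996,BS96}). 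Combining these via the tower property gives $f(s,\Cont)=\mathbb E_s^{\Cont}\bigl[\,\mathbb E_s^{\Cont}[\mathbf 1_{\{\Sys\models\square\lozenge B\}}\mid s_1]\,\bigr]=\mathbb E_s^{\Cont}[f(s_1,\Cont)]$, which is \eqref{eq:onestep}.

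Next, fix $s\in W$, so that $f(s,\Cont)=1$. Since $0\le f(s',\Cont)\le 1$ for all $s'$ and $\Ker(\cdot\mid s,\Cont(s))$ is a probability measure, \eqref{eq:onestep} yields $\int_{\mathcal S}\bigl(1-f(s',\Cont)\bigr)\,\Ker(ds'\mid s,\Cont(s))=0$ with a non–negative integrand, hence $f(s',\Cont)=1$ for $\Ker(\cdot\mid s,\Cont(s))$–almost every $s'$; equivalently
\begin{equation}
\label{eq:Wclosed}
\Ker\bigl(W\mid s,\Cont(s)\bigr)=1\qquad\text{for every }s\in W.
\end{equation}
Finally, by stationarity of $\Cont$ and the Markov property, for every $k$ the conditional law of $w[k+1]$ given $w[k]$ under $P_s^{\Cont}$ is precisely $\Ker(\cdot\mid w[k],\Cont(w[k]))$; applying \eqref{eq:Wclosed} with $s=w[k]$ whenever $w[k]\in W$ gives $P_s^{\Cont}(w[k+1]\in W\mid w[k])=1$, which is the claimed statement. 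A routine induction on $k$ together with countable intersection upgrades this to $P_s^{\Cont}(\forall k.\ w[k]\in W)=1$ for every $s\in W$, so $W$ is absorbing.

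The only genuinely delicate point — the one I would be most careful about — is the measurability bookkeeping in the general Borel–space setting: that $s\mapsto f(s,\Cont)$ is universally measurable so that \eqref{eq:onestep} is meaningful, and that the disintegration/tower step is legitimate for the canonical path measure induced by a universally measurable stationary kernel. These are exactly the ingredients supplied by the constructions in \cite{BS96,hll1996} and the measurability analysis of \cite{TMKA17}, so they should be cited rather than re-derived; modulo that, the argument is the short computation above.
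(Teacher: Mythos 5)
Your proposal is correct and follows essentially the same route as the paper's proof: both start from the one-step identity $f(s,\Cont)=\int_{\mathcal S} f(s',\Cont)\,\Ker(ds'\mid s,\Cont(s))$ and deduce $\Ker(W\mid s,\Cont(s))=1$ for $s\in W$ from the vanishing of the non-negative integral $\int_{\mathcal S\setminus W}(1-f(s',\Cont))\,\Ker(ds'\mid s,\Cont(s))$, the paper making the almost-everywhere step explicit via Markov's inequality. You additionally spell out the shift-invariance/Markov-property justification of the one-step identity and the measurability bookkeeping, which the paper leaves implicit; this is a presentational difference only.
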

\begin{proof}
For any $s\in W$, we have
\begin{align*}
&P_s^{\Cont}(\mathfrak S\models \square\lozenge B)
= \int_{\mathcal S}P_{s_1}^{\Cont}(\mathfrak S\models \square\lozenge B)\Ker(ds_1|s,\Cont(s))\\
&=  \int_{W} \Ker(ds_1|s,\Cont(s)) + \int_{\mathcal S\backslash W} P_{s_1}^{\Cont}(\mathfrak S\models \square\lozenge B)\Ker(ds_1|s,\Cont(s)).
\end{align*}
This means 
\begin{align*}
&\int_{\mathcal S\backslash W} (1-P_{s_1}^{\Cont}(\mathfrak S\models \square\lozenge B))\Ker(ds_1|s,\Cont(s))=0,\\
&P_s^{\Cont}\left[(1-P_{s_1}^{\Cont}(\mathfrak S\models \square\lozenge B))\mathbf{1}_{\mathcal S\backslash W}(s_1)\ge \epsilon\right]\le \frac{0}{\epsilon} = 0,
\end{align*}
where the last inequality is a consequence of Markov's inequality for non-negative random variables. By taking the union over a monotone positive sequence $\set{\epsilon_n\rightarrow 0}$, we get
\begin{align*}
&P_s^{\Cont}\left[(1-P_{s_1}^{\Cont}(\mathfrak S\models \square\lozenge B))\mathbf{1}_{\mathcal S\backslash W}(s_1)>0 \right] = 0,\\
&P_s^{\Cont}\left[s_1\in \mathcal S\backslash W \text{ and } P_{s_1}^{\Cont}(\mathfrak S\models \square\lozenge B)<1\right] = 0,\\
&P_s^{\Cont}\left[ s_1\in \mathcal S\backslash W\right] = 0.
\end{align*}
\end{proof}

\begin{proposition}
Given a countable sequence of stationary policies $\set{\Cont_1,\Cont_2,\ldots}$ for the system $\Sys$ with winning regions $\set{\W(\Sys,\Cont_n),\,n=1,2,\ldots}$, there is a controller $\Cont$ with winning region \ $\W(\Sys,\Cont) = \cup_{n=1}^{\infty} \W(\Sys,\Cont_n)$. 
\end{proposition}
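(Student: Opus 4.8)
Write $W_n:=\W(\Sys,\Cont_n)$ and $W:=\bigcup_{n\ge 1}W_n$; by \REFprop{prop:measurability of windom} each $W_n$ is absorbing under $\Cont_n$. Put $V_1:=W_1$ and $V_n:=W_n\setminus\bigcup_{m<n}W_m$, so that $\{V_n\}_{n\ge 1}$ is a countable measurable partition of $W$. I would define the stationary controller $\Cont$ by setting $\Cont(s):=\Cont_n(s)$ for $s\in V_n$, and by choosing $\Cont$ on $\Xs\setminus W$ through a universally measurable selection discussed at the end. The first easy observation is that $W$, and more generally each $U_n:=W_1\cup\dots\cup W_n$, is absorbing under this $\Cont$: from a state of $V_n$ a $\Cont$-step is a $\Cont_n$-step, so it lands in $W_n\subseteq W$ almost surely, and from $U_{n-1}$ it stays in $U_{n-1}$ almost surely.

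The core of the argument is the inclusion $W\subseteq\W(\Sys,\Cont)$, which I would establish by induction on $n$, proving $U_n\subseteq\W(\Sys,\Cont)$. For $n=1$ this is immediate: $\Cont$ coincides with $\Cont_1$ on $W_1=U_1$ and $W_1$ is absorbing under $\Cont_1$, so $\Cont$-paths started in $W_1$ have the same law as $\Cont_1$-paths and satisfy $\square\lozenge B$ almost surely. For the inductive step I would fix $s\in W_n$ and let $\tau$ be the first time the $\Cont$-path hits $U_{n-1}$. On $\{\tau<\infty\}$, the strong Markov property together with the induction hypothesis $U_{n-1}\subseteq\W(\Sys,\Cont)$ (and the fact that $\square\lozenge B$ is a tail event) yields $\square\lozenge B$ almost surely. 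On $\{\tau=\infty\}$, using $\Cont=\Cont_n$ on $V_n$ and the absorption of $W_n$ under $\Cont_n$, each step starting in $V_n$ lands in $W_n$ almost surely; intersecting with $\{\tau=\infty\}$ forces the path to stay in $V_n=W_n\setminus U_{n-1}$ forever, so almost surely $\{\tau=\infty\}$ coincides with the event that the $\Cont$-path never leaves $V_n$. Coupling the $\Cont$-path and the $\Cont_n$-path from $s$ so that they take the same transition while at a common state, the two agree as long as both remain in $V_n$, hence agree on that event; and since $P_s^{\Cont_n}(\Sys\models\square\lozenge B)=1$, on the event that the $\Cont_n$-path stays in $V_n$ forever it still satisfies $\square\lozenge B$ almost surely. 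Transporting this through the coupling gives $\square\lozenge B$ almost surely on $\{\tau=\infty\}$ as well, so $P_s^{\Cont}(\Sys\models\square\lozenge B)=1$. Hence $W_n\subseteq\W(\Sys,\Cont)$, so $U_n\subseteq\W(\Sys,\Cont)$, and taking the union over $n$ gives $W\subseteq\W(\Sys,\Cont)$.

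The reverse inclusion $\W(\Sys,\Cont)\subseteq W$ is where the delicacy lies, and is the step I expect to be the main obstacle. Since $W$ is absorbing under $\Cont$ and $\square\lozenge B$ holds almost surely from $W$, for $s\notin W$ we have $P_s^{\Cont}(\Sys\models\square\lozenge B)=P_s^{\Cont}(\Sys\models\lozenge W)+P_s^{\Cont}\big(\Sys\models\square(\Xs\setminus W)\wedge\square\lozenge B\big)$, and an arbitrary extension of $\Cont$ off $W$ can make this equal $1$ at some $s\notin W$ (for instance when $B\setminus W$ can be visited infinitely often). So equality in the statement forces me to fix $\Cont$ on $\Xs\setminus W$ carefully: the plan is to take there a universally measurable selector realizing the \emph{smallest} almost-sure $\square\lozenge B$ winning region of the sub-process obtained from $\Sys$ by collapsing $W$ to a single accepting absorbing state, and then to argue that with this choice no state of $\Xs\setminus W$ is almost surely winning --- any such state would carry a stationary witness whose own absorbing winning region could be appended to the family $\{V_n\}$ without disturbing the rest of the construction, contradicting that every $W_n$ has already been absorbed. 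The genuinely nontrivial point is to obtain a \emph{single} stationary policy on $\Xs\setminus W$ that is simultaneously non-winning from every state there; I would handle this with the universally measurable selection machinery already invoked for policies in the paper (cf.\ \cite{BS96}). For the way this proposition is used later, only the inclusion $W\subseteq\W(\Sys,\Cont)$ is actually needed, the exact equality being a convenience.
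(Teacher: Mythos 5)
Your proof is correct and follows essentially the same route as the paper's: the same disjointification $V_n=W_n\setminus\bigcup_{m<n}W_m$, the same piecewise-defined stationary policy, and the same dichotomy (either the path stays in $V_n$ forever, where $\Cont$ agrees with $\Cont_n$ and inherits $\square\lozenge B$ almost surely, or it drops to a lower index, handled by induction); your coupling merely formalizes a step the paper only asserts, and your measurability remark matches the paper's closing paragraph. The one place you go beyond the paper is the reverse inclusion $\W(\Sys,\Cont)\subseteq\bigcup_n W_n$: the paper's proof establishes only the forward inclusion and says nothing about how $\Cont$ is defined off the union, so your observation that the stated equality requires a careful extension (or should be weakened to an inclusion, which is all that is used downstream) flags a genuine, though harmless, omission in the paper's own argument.
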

\begin{proof}
Define the sets $\set{\overline{W}_n,n=1,2,\ldots}$ inductively as $\overline{W}_1 := \W(\Sys,\Cont_1)$ and $\overline{W}_n := \W(\Sys,\Cont_n)\backslash\cup_{i=1}^{n-1}\overline{W}_n$ for all $n\in\set{2,3,\ldots}$. 
This construction is illustrated in \REFfig{fig:venn for overall windomain}.
Also, define the new stationary policy:
\begin{equation}
\Cont(s) :=
\begin{cases}
\Cont_1(s) & \text{ if } s\in \overline{W}_1\\
 \Cont_2(s) & \text{ if } s\in \overline{W}_2\\
 \vdots
\end{cases} 
\end{equation}
for all non-empty sets $\overline{W}_n$. 
It is easy to show that the sets $\set{\overline{W}_i}$ are non-intersecting and $\cup_{i=1}^{n}\overline{W}_i = \cup_{i=1}^{n} \W(\Sys,\Cont_i)$. 
Then for any initial state $s\in \cup_{i=1}^{\infty} \W(\Sys,\Cont_i)$, there is some $n$ such that $s\in \overline{W}_n$. Note that all sets $\W(\Sys,\Cont_i)$ are absorbing under their respective policy. 
The path starting from $s$ with $\Cont_n$ either stay in $\W(\Sys,\Cont_n)$ or will reach some $\overline{W}_i$ with $i<n$. 
In the first case, the path satisfies the specification with probability one. The same argument can be applied a finite number of times until reaching the lowest index $i=1$.

\begin{figure}
	\begin{tikzpicture}
	\draw[red,fill=red!20,fill opacity=1] (0,0)		circle		(1cm);
	\draw[blue,fill=blue!20,fill opacity=1] (0.707,0.707)		circle		(1cm);
	\draw[green,fill=green!20,fill opacity=1]	 (-0.707,0.707)	circle		(1cm);
	
	\draw[blue, dashed]	(0.707,0.707)		circle		(1cm);
	\draw[red,dashed] (0,0)		circle		(1cm);
	
	\node	at	(-0.9,0.9)	{$\overline{W}_1$};
	\node	at	(0.9,0.9)		{$\overline{W}_2$};
	\node	at	(0,-0.6)	{$\overline{W}_3$};
	
\end{tikzpicture}
	\caption{Illustration of construction of $\overline{W}_1$ (green filled part), $\overline{W}_2$ (blue filled part), and $\overline{W}_3$ (red filled part) from $\W(\Sys,\Cont_1)$ (green circle), $\W(\Sys,\Cont_2)$ (blue circle), and $\W(\Sys,\Cont_3)$ (red circle).}
	\label{fig:venn for overall windomain}
\end{figure}

\textbf{measurability of $\Cont$.} Note that the sets $\set{\overline{W_1}, \overline{W_2},\ldots}$ are universally measurable and the policies $\set{\Cont_1,\Cont_2,\ldots}$ are universally measurable functions. We also have $\Cont^{-1}(A) = \cup_{i=1}^{\infty} [\Cont_i^{-1}(A)\cap \overline{W_i}]$, which means $\Cont^{-1}(A)$ is universally measurable for any universally measurable $A$. Therefore, $\Cont$ is a universally measurable function.


\end{proof}
\fi

\end{document}